\newcommand{\cal}[1]{\mathcal{#1}}
\newcommand{\cA}{\cal A}
\newcommand{\cC}{\cal C}
\newcommand{\cL}{\cal L}
\newcommand{\cS}{\cal S}
\newcommand{\F}{{\mathbb F}}
\newcommand{\N}{{\mathbb N}}
\newcommand{\Z}{{\mathbb Z}}
\def \F{{\mathbb F}}
\def \Z{{\mathbb Z}}
\def\norm#1.#2.{\lVert#1\rVert_{#2}}
\newtheorem{theorem}{Theorem} 
\newtheorem{proposition}[theorem]{Proposition} 
\newtheorem{example}[theorem]{Example} 
\newtheorem{lemma}[theorem]{Lemma}
\newtheorem{definition}[theorem]{Definition} 
\newtheorem{remark}[theorem]{Remark}
\newtheorem{corollary}[theorem]{Corollary} 
\theoremstyle{plain}
\theoremstyle{plain}
\newcommand{\acar}{\textrm{ACar}}
\newcommand{\ev}{\operatorname{ev}}
\newcommand{\GRS}{\operatorname{GRS}}
\newcommand{\hull}{\operatorname{Hull}}
\newcommand{\lcm}{\operatorname{lcm}}
\newcommand{\spn}{\operatorname{Span}}
\newcommand{\supp}{\operatorname{supp}}
\newcommand{\TGRSG}{\operatorname{T}}
\newcommand{\tr}{tr}
\newcommand{\hir}[1]{\textcolor{red}{#1}}
\newcommand{\rmv}[1]{}
\begin{document}


\title[Multivariate Goppa codes]{Multivariate Goppa codes}
\author{Hiram H. L\'opez}
\address[Hiram H. L\'opez]{Department of Mathematics\\ Cleveland State University\\ Cleveland, OH USA}
\email{h.lopezvaldez@csuohio.edu}

\author{Gretchen L. Matthews}
\address[Gretchen L. Matthews]{Department of Mathematics\\ Virginia Tech\\ Blacksburg, VA USA}
\email{gmatthews@vt.edu}
\thanks{The first author was partially supported by an AMS--Simons Travel Grant. The second author was supported by NSF DMS-1855136, NSF DMS-2037833, and the Commonwealth Cyber Initiative.}
\keywords{Goppa codes, augmented Cartesian codes, tensor products of Reed-Solomon codes, quantum error-correcting, LCD, self-dual, self-orthogonal}
\subjclass[2010]{94B05; 11T71; 14G50}

\begin{abstract}
In this paper, we introduce multivariate Goppa codes, which contain as a special case the well-known, classical Goppa codes. We provide a parity check matrix for a multivariate Goppa code in terms of a tensor product of generalized Reed-Solomon codes. We prove that multivariate Goppa codes are subfield subcodes of augmented Cartesian codes. By showing how this new family of codes relates to tensor products of generalized Reed-Solomon codes and augmented codes, we obtain information about the parameters, subcodes, duals, and hulls of multivariate Goppa codes. We see that in certain cases, the hulls of multivariate Goppa codes (resp., tensor product of generalized Reed-Solomon codes),  are also multivariate Goppa codes (resp. tensor product of generalized Reed-Solomon codes). We utilize the multivariate Goppa codes to obtain entanglement-assisted quantum error-correcting codes and to build families of long LCD, self-dual, or self-orthogonal codes.
\end{abstract}

\maketitle

\section{Introduction}

Goppa codes were introduced in 1971 by V. D. Goppa \cite{Goppa1, Goppa2} using a polynomial $g(x)$, called a generator polynomial, over the finite field $\F_q$ with $q$ elements. Properties of a Goppa code are tied to those of the generator polynomial. For instance, such codes have minimum distance at least $\deg(g) + 1$. Many Goppa codes have parameters exceeding the Gilbert bound. Moreover, Goppa codes have efficient decoding algorithms. The McEliece cryptosystem, of current interest as the basis for one of only remaining candidates in the NIST Post-Quantum Cryptography Standardization \cite{McE_NIST3, McE_NIST1}, employs Goppa codes \cite{McEliece}. Goppa codes can be viewed from several different perspectives, each giving a window into their capabilities. We aim in this work to generalize Goppa codes to a multivariable case.

Let $\mathbb{F}_{q^t}$ be a finite field with  $q^t$ elements. The polynomial ring over $\F_{q^t}$ in $m$ variables is denoted by $\F_{q^t}[x_1,\ldots,x_m]$ or $\F_{q^t}[{\bm x}],$ when there is no ambiguity on the number of variables. A multivariate Goppa code is defined as follows. Fix non-empty subsets $S_1,\ldots,S_m \subseteq \F_{q^t}$ and their {\it Cartesian product} 
\[\cS := S_1\times\cdots\times S_m\subseteq \F_{q^t}^{m}.\]
Enumerate the elements of $\cS=\{\bm{s}_1,\ldots,\bm{s}_n\} \subseteq F_{q^t}^{m}.$ Take $g \in \F_{q^t}[\bm{x}]$ such that $g(\bm{s}_i) \neq 0$ for all $i\in [n].$ In addition, assume that $g$ can be expressed as a product $g=g_1\cdots g_m,$ where $g_i\in \F_{q^t}[x_i].$ The {\it multivariate Goppa code} is denoted and defined by
\[\Gamma(\cal S, g) := \left\{ (c_1,\ldots,c_n) \in \mathbb{F}_q^n : \sum_{i=1}^n \frac{c_i}{\prod_{j=1}^m(x_j-s_{ij})} = 0 \mod g({\bm x}) \right\},\]
where $\bm{s}_i :=(s_{i1},\ldots,s_{im}) \in \cal S$.

Taking $m=1,$ we obtain the Goppa codes as in \cite{Goppa_Berlekamp, Goppa1, Goppa2}. Setting $m=t=1$ gives the codes considered in \cite{GYHZ}. It is worth noting  that $\Gamma(\cal S, g)$ is a code over $\F_q$ of length $n$ given by $\mid \cS \mid$ where $\cS \subseteq \F_{q^t}^{m}$; thus, $n \leq  q^{tm}$. Hence, allowing larger values of $t$ and $m$ provides longer codes over the same field. As we will see in Corollary \ref{mvG_dim} at the end of Section~\ref{mvGoppa_section}, taking larger values of $m$ allows one to obtain codes of the same lengths over the same field but with potentially larger dimensions.

As usual, an $[n,k,d]$ code over $\F_{q^t}$ is a code of length $n$, dimension $k$, and minimum distance $d := \min\{|\supp(c)| : 0\neq c\in C\},$ where $\supp(c)$ denotes the support of $c$, that is, the set of all non-zero entries of $c.$ Given ${v} \in \F^n$, we denote its $i^{th}$ component by $v_i$ where $i \in [n]$. The dual of an $[n,k,d]$ code $C$ is \[C^{\perp}:= \left\{ w \in \F^n: w \cdot c=0 \ \forall c \in C \right\};\] that is, the dual is taken with respect to the Euclidean inner product. The {\it hull} of $C$ is  $\hull(C)=C\cap C^{\perp}.$  The code $C$ is {\it linear complementary dual} ({\it LCD}) \cite{Massey} if $\hull(C)=\{ \bm{0} \}$ and is  {\it self-orthogonal} if $C \subseteq C^{\perp}.$ 

In Section~\ref{mvGoppa_section}, we recall the definition of a generalized Reed-Solomon (GRS) code, which is a well-known code that depends of an integer $k$ and a polynomial $g\in \F_{q^t}[x].$ When $k=\deg(g),$ the GRS code is called a GRS code via a Goppa code. This family was recently studied by Y.~Gao, Q.~Yue, X.~Huang, and J.~Zhang in~\cite{GYHZ} where the authors describe conditions, using the properties of classical Goppa codes, so the dual of a GRS code via a Goppa code is again a GRS code via a Goppa code. Thus, having the control over the dual, the authors are able to find the hull and give applications to quantum, LCD, self-orthogonal, and self-dual codes. Then we introduce the tensor product of GRS codes and the tensor product of GRS codes via a Goppa code. The former has been studied before due to their decoding properties. In~\cite{listtensor}, the authors provide a list decoding algorithm for the tensor product of GRS codes. In~\cite{campsmoreno2021decoding}, they authors use the tensor product of GRS codes to decode {\it hyperbolic codes}, which are augmented Reed-Muller codes, in the sense that the dimension is greater than or equal, but the minimum distance is the same. The tensor product of GRS codes via a Goppa code is important in obtaining the following result, which is proved in Section~\ref{mvGoppa_section}. Given a multivariate Goppa code $\Gamma(\cal S, g)$, $$\Gamma(\cal S, g) = \{ {\bm c}\in \F_q^n : \TGRSG{\bm c}^T = 0 \},$$
where $\TGRSG$ is a generator matrix of certain tensor product of GRS codes via a Goppa code that depends of $g.$ As a consequence, we see that the Goppa code $\Gamma(\cal S, g)$ is a subfield subcode of the dual of a tensor product of GRS codes via Goppa codes.

In Section~\ref{relations_section}, we prove that the multivariate Goppa code $\Gamma(\cal S, g)$ is a subfield subcode of an augmented Cartesian code. Augmented Cartesian (ACar) codes are a family of evaluation codes recently introduced in~\cite{lmv,aug_isit} where the authors present linear exact repair schemes for the ACar and give examples where ACar codes provide a lower bandwidth (resp., bitwidth) than RS codes (resp., Hermitian) codes when the dimension and basefield are fixed. Even more, ACar are decreasing monomial-Cartesian codes, which have applications to certain polar codes~\cite{CLMS}. In this paper, we demonstrate that 
\[\Gamma(\cal S, g)=\acar\left(\mathcal{S},{\displaystyle g}\right)_q,\]
where $\acar\left(\mathcal{S},{\displaystyle g}\right)_q$ represents the subfield subcode of certain ACar code which yields information about the basic parameters of the multivariate Goppa code $\Gamma(\cal S, g).$

In Section~\ref{hulls_section}, we study the three families just described: multivariate Goppa codes, tensor product of GRS codes via a Goppa code, and augmented Cartesian codes. Each of these families depend of a polynomial $g$ in $\F_{q^t}[{\bm x}].$ We give conditions on $g$ to determine determine subcodes, intersections, and hulls. One of the main results states that for certain $f,g$ in $\F_{q^t}[{\bm x}],$ then
\begin{itemize}
\item[\rm{(i)}] $\hull\left(\TGRSG(\cS,g)\right)=\TGRSG(\cS,\gcd(f,g))=\hull\left(\acar(\cS,g)\right),$ and
\item[\rm{(ii)}] $\Gamma(\cal S, \lcm(f,g))\subseteq \hull\left(\Gamma(\cal S, g)\right),$ with equality when $t=1.$
\end{itemize}

In Section~\ref{quantum_section}, we design quantum, LCD, self-orthogonal, and self-dual codes from multivariate Goppa codes and tensor product of GRS codes via Goppa codes, relying on the results of Section~\ref{hulls_section}. One of the main contributions in Section~\ref{quantum_section} provides an algorithm to find LCD, self-orthogonal and self-dual codes. This approach is different than that given in \cite{GYHZ}. An immediate difference is that using GRS codes, the length of the code is always bounded by the size of the field whereas this restriction is not needed in Section~\ref{quantum_section}, for instance, for the tensor product. Even more, the results of Section~\ref{quantum_section} enable a single set of defining polynomials to produce a family of codes with different lengths over a certain field (cf. \cite[Theorem 2.6]{GYHZ}). We provide some examples at the end of Section~\ref{quantum_section}. Finally, a brief summary is given as a conclusion in Section \ref{conclusion_section}.

More information about basic theory for coding theory can be found in \cite{huf-pless,MacWilliams-Sloane,van-lint}.
References for the theory of vanishing ideals and algebraic concepts used in this work are \cite{CLO1,Eisen,harris,monalg}.

\section{A parity check matrix given by the tensor product of GRS codes} \label{mvGoppa_section}
In this section, we introduce the tensor product of generalized Reed-Solomon codes via Goppa codes. We show that this family provides a parity check matrix for the multivariate Goppa codes. As a consequence, we are able to give bounds for the dimension of the Goppa code. In addition, we give a representation for the dual of a multivariate Goppa code in terms of the trace of the tensor product.

The set of $m \times n$ matrices over $\F_{q^t}$ is denoted $\F^{m \times n}_{q^t}$. The Kronecker product of matrices $A=[a_{ij}] \in \F_{q^t}^{r\times s}$ and $B \in \F_{q^t}^{m_1\times m_2}$ is the matrix that can be expressed in block form as
\[ A\otimes B := \left(\begin{array}{cccc} a_{11}B & a_{12}B & \cdots & a_{1s}B \\ a_{21}B & a_{22}B & \cdots & a_{2s}B \\
\vdots & \vdots &  & \vdots \\ a_{r1}B & a_{r2}B & \cdots & a_{rs}B \\ \end{array}\right)
\in \F_{q^t}^{rm_1\times sm_2}.
\]

\rmv{, $Row_iM$ (resp., $Col_jM$) denotes the $i^{th}$ row of $M \in \F^{m \times n}$ (resp., the $j^{th}$ column of $M$). } A generator matrix for an $[n,k,d]$ code $C$ is any matrix whose row span is $C$. Given a generator matrix $G_1$ of a code $\cC_1$ and a generator matrix $G_2$ of a code $\cC_2,$ the code $\cC_1 \otimes \cC_2$ is defined as the code whose generator matrix is $G_1\otimes G_2.$

Next, we relate the multivariate Goppa codes to generalized Reed-Solomon codes.  Given $k \in \Z^{+}$, $\F_{q^t}[x]_{<k}$ denotes the set of polynomials of degree less than $k.$ Recall that a {\it generalized Reed-Solomon} (GRS) code is defined by
\[ \GRS(S,k,g) := \left\{ \left( g(s_1)^{-1} f(s_1), \dots, g(s_n)^{-1}f(s_n) \right) : f \in \F_{q^t}[x]_{<k} \right\},\]
where $g \in \F_{q^t}[x]$ and $S \subseteq \F_{q^t}$. $\GRS$ codes in the particular case $t=1$ and $k=\deg (g)$ are called {\it GRS codes via a Goppa code} and denoted by $\GRS(S,g),$ {\it i.e.}
 \[ \GRS(S,g) := \GRS(S,\deg(g),g).\] 
 GRS codes via a Goppa code where studied in~\cite{GYHZ}. We note that $\GRS(S,k,g)$ is an $[n, k, n-k+1]$ code over $\F_{q^t}$ with $n \leq q^t$, meaning it is maximum distance separable (MDS). As we will see, tensor products of generalized Reed-Solomon codes play an important role in the duals of multivariate Goppa codes. In what follows, $n_i=|S_i|$, the cardinality of $S_i$ for $i\in[m]:=\left\{1, \ldots, m \right\}.$ From now on, when we take an element $g=g_1\cdots g_m\in\F_{q^t}[{\bm x}]$, we mean that every $g_i \in \F_{q^t}[x_i]$. The expression $g(\cS)\neq 0$ represents that $g({\bm s})\neq 0$ for all ${\bm s} \in \cS.$

\begin{definition} \label{tensor_product_codes_def}\rm
Let $\cS=S_1\times \cdots \times S_m \subseteq \F_{q^t}^m$ and $g=g_1\cdots g_m\in \F_{q^t}[{\bm x}]$ such that $g(\cS)\neq 0.$ Take ${\bm k} = \left(k_1,\ldots,k_m \right) \in \Z^m$ with $0\leq k_j \leq n_j$ for all $j\in [m].$ We define the tensor product of generalized Reed-Solomon codes as 
 \[\TGRSG(\cS,{\bm k},g):=\bigotimes_{j=1}^{m} \GRS(S_j,k_j,g_j). \] The tensor product of generalized Reed-Solomon codes via Goppa codes is 
 \[\TGRSG(\cS,g):=\bigotimes_{j=1}^{m} \GRS(S_j,\deg(g_j),g_j). \]
\rmv{
\[\TGRSG(\cS,{\bm k},g):=\bigotimes_{j=1}^{m} \GRS(S_j,k_j,g_j) \qquad \text{ and } \qquad \TGRSG(\cS,g):=\bigotimes_{j=1}^{m} \GRS(S_j,\deg(g_j),g_j). \]}
\end{definition}

A generator matrix of $\TGRSG(\cS,g)$ may be specified entrywise by
\begin{equation} \label{gen_matrix_T}
\left(  \frac{\bm{s}_i^{\bm{a}}}{g(\bm{s}_i)} \right)_{a,j} \in \F_{q^t}^{\deg (g) \times n} 
\end{equation}
where the rows and columns are indexed by $\bm{a} \in \N^{\deg(g_1) - 1 \times \cdots \times \deg( g_m) -1}$ and $i \in [n],$ respectively. 

\begin{remark}\label{21.10.30}
Observe that the tensor product of generalized Reed-Solomon codes $\TGRSG(\cS,{\bm k},g)$ has the following basic parameters.
\begin{itemize}
\item[\rm (i)] Length $n=\mid \mathcal{S} \mid.$
\item[\rm (ii)] Dimension $k= \prod_{j = 1}^m k_j.$
\item[\rm (iii)] Minimum distance $d= \prod_{j = 1}^m (n_j - k_j +1).$
\end{itemize}
In particular, $\TGRSG(\cS,g)$ is an $[n, \deg( g), \prod_{j = 1}^m (n_j - \deg(g_j) +1) \}]$ code over $\F_{q^t}.$
\end{remark}

\begin{remark}\label{21.09.15}
Note that $\GRS(S_j,k_j,g_j)=\{ \bm{0} \}$ if and only if $k_j=0.$ Thus, $\TGRSG(\cS,{\bm k},g)=\{ \bm{0} \}$ if and only if there is $j\in [m]$ such that $k_j=0.$ In addition, $\GRS(S_j,k_j,g_j)=\F_{q^t}^{n_j}$ if and only if $k_j = n_j.$ Thus, $\TGRSG(\cS,{\bm k},g)=\F_{q^t}^{n}$ if and only if ${\bm k}=(n_1,\ldots,n_m).$
\end{remark}

To relate multivariate Goppa codes to those codes in Definition \ref{tensor_product_codes_def}, observe that given any two polynomials 
$p(x_1)=p_{\ell} x_1^{\ell} + \dots+p_1 x_1 + p_0 =(x_1^{\ell},\ldots,x_1,1)\cdot (p_{\ell},\ldots,p_1,p_0)\in \F_{q^t}[x_1]$
and
$q(x_2)=q_{k} x_2^{k} + \dots+q_1 x_2 + q_0 = (x_2^{k},\ldots,x_2,1)\cdot (q_{k},\ldots,q_1,q_0)\in \F_{q^t}[x_2],$ we may abuse notation and write
\begin{eqnarray*}
p(x_1)q(x_2)=\left(\left(\begin{array}{c}
x_1^{\ell}\\
\vdots\\
x_1\\
1
\end{array}\right)
\otimes
\left(\begin{array}{c}
x_2^{k}\\
\vdots\\
x_2\\
1
\end{array}\right)\right)^T
\left(\left(\begin{array}{c}
p_\ell\\
\vdots\\
p_1\\
1
\end{array}\right)
\otimes
\left(\begin{array}{c}
q_k\\
\vdots\\
q_1\\
1
\end{array}\right)
\right).
\end{eqnarray*}
In addition, if $s\in \F_{q^t},$ then, modulo $q(x_2),$ the following two equations are valid
\begin{eqnarray}
\frac{1}{\left(x_2-s\right)}
&=& \frac{(-1)}{q(s)}\frac{\left(q(x_2)-q(s)\right)}{\left(x_2-s\right)}\\
&=& \frac{(-1)}{q(s)}
\left(\begin{array}{c}
x_2^{k-1}\\
\vdots\\
x_2\\
1
\end{array}\right)^T
\left(\begin{array}{cccccc}
q_{k} & 0 & \cdots & 0\\
q_{k-1} & q_k & \cdots & 0\\
\vdots & \vdots & \vdots& \vdots\\
q_1 & q_2 & \cdots & q_k\\
\end{array}\right)
\left(\begin{array}{c}
1\\
s\\
\vdots\\
s^{k-1}
\end{array}\right).\label{21.09.08}
\end{eqnarray}

We come to one of the main results of this section, which gives a representation of a multivariate Goppa code in terms of a tensor product of GRS codes.
\begin{theorem}\label{21.10.24}
Given a multivariate Goppa code $\Gamma(\cal S, g)$, $$\Gamma(\cal S, g) = \{ {\bm c}\in \F_q^n : \TGRSG{\bm c}^T = 0 \},$$
where $\TGRSG$ is a generator matrix of $\TGRSG(\cS,g)$; that is,  $\Gamma(\cal S, g)$ is a subfield subcode of the dual of a tensor product of GRS codes via Goppa codes.
\end{theorem}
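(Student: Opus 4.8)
The plan is to translate the rational-function congruence defining $\Gamma(\cS,g)$ into an explicit system of $\F_{q^t}$-linear equations in the coordinate vector $\bm c$, and then to recognize that system as $\TGRSG\bm c^{T}=0$ for $\TGRSG$ a generator matrix of $\TGRSG(\cS,g)$. The engine is equation~(\ref{21.09.08}): applied with $g_j$ in place of $q$, it writes, for each $j\in[m]$ and each $s\in\F_{q^t}$ with $g_j(s)\neq 0$, the reduction of $\tfrac{1}{x_j-s}$ modulo $g_j(x_j)$ as $\tfrac{-1}{g_j(s)}\,\bm v_j(x_j)^{T}M_j\,\bm w_j(s)$, where $\bm v_j(x_j)=(x_j^{k_j-1},\dots,x_j,1)^{T}$, $\bm w_j(s)=(1,s,\dots,s^{k_j-1})^{T}$ with $k_j=\deg(g_j)$, and $M_j$ is the lower-triangular Toeplitz matrix built from the coefficients of $g_j$. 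Note that $M_j$ is invertible, since its diagonal entry is the (nonzero) leading coefficient of $g_j$.

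First I would multiply these $m$ identities together. Since the $j$-th factor involves only $x_j$, the product is the reduction of $\tfrac{1}{\prod_{j}(x_j-s_{ij})}$ in $\bigotimes_{j}\bigl(\F_{q^t}[x_j]/(g_j(x_j))\bigr)$, which is the precise meaning of reduction modulo $g(\bm x)$. Collecting scalars and applying the mixed-product rule $(A\otimes B)(C\otimes D)=(AC)\otimes(BD)$ to the three tensor factors attached to each index $j$ yields
\[
\frac{1}{\prod_{j=1}^{m}(x_j-s_{ij})}\;\equiv\;\frac{(-1)^{m}}{g(\bm s_i)}\left(\bigotimes_{j}\bm v_j(x_j)\right)^{T}\left(\bigotimes_{j}M_j\right)\left(\bigotimes_{j}\bm w_j(s_{ij})\right).
\]
The vector $\bigotimes_{j}\bm v_j(x_j)$ lists exactly the monomials $\bm x^{\bm a}$ with $0\le a_j\le k_j-1$, which form an $\F_{q^t}$-basis of $\bigotimes_{j}\bigl(\F_{q^t}[x_j]/(g_j(x_j))\bigr)$. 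Hence, after summing against the $c_i$, the defining congruence of $\Gamma(\cS,g)$ holds if and only if the coefficient vector on this basis vanishes, namely
\[
\left(\bigotimes_{j}M_j\right)\sum_{i=1}^{n}\frac{c_i}{g(\bm s_i)}\left(\bigotimes_{j}\bm w_j(s_{ij})\right)=0.
\]
As $\bigotimes_{j}M_j$ is invertible, this is equivalent to $\sum_{i=1}^{n}\frac{c_i}{g(\bm s_i)}\bigotimes_{j}\bm w_j(s_{ij})=0$, i.e.\ to $\sum_{i=1}^{n}c_i\,\frac{\bm s_i^{\bm a}}{g(\bm s_i)}=0$ for every $\bm a$ with $0\le a_j\le\deg(g_j)-1$.

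Finally I would observe that these are exactly the equations $\TGRSG\bm c^{T}=0$ with $\TGRSG$ the generator matrix of $\TGRSG(\cS,g)$ displayed entrywise in~(\ref{gen_matrix_T}); since whether $\TGRSG\bm c^{T}=0$ holds depends only on the row span, any generator matrix works. Therefore $\Gamma(\cS,g)=\{\bm c\in\F_q^{n}:\TGRSG\bm c^{T}=0\}=\TGRSG(\cS,g)^{\perp}\cap\F_q^{n}$, the subfield subcode over $\F_q$ of the dual of the tensor product of GRS codes via Goppa codes. The step needing the most care is the bookkeeping: fixing compatible orderings for the Kronecker products so that the rows of $\TGRSG$ match the tuples $\bm a$ and its columns match the points $\bm s_i$, keeping track of the transposes, and confirming that $\bigotimes_{j}\bm v_j(x_j)$ genuinely gives a monomial basis of the quotient ring (the degenerate case $\deg(g_j)=0$, where $\TGRSG(\cS,g)=\{\bm 0\}$ and both sides equal $\F_q^{n}$, is handled separately).
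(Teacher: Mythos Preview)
Your proposal is correct and follows essentially the same approach as the paper: both use equation~(\ref{21.09.08}) to rewrite each factor $\tfrac{1}{x_j-s_{ij}}$ modulo $g_j$, assemble these via Kronecker products, and invoke the invertibility of the lower-triangular Toeplitz matrices $M_j$ (and hence of $\bigotimes_j M_j$) to reduce the congruence to the orthogonality conditions against the rows of $\TGRSG$. The only cosmetic differences are that the paper works through $m=1$ and $m=2$ explicitly before the general case while you handle all $m$ uniformly, and that you are more explicit that each step is an equivalence and about the degenerate case $\deg(g_j)=0$.
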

\begin{proof}
According to (\ref{gen_matrix_T}), the following vectors generate the code $\TGRSG(\cS,g)$
\begin{equation}\label{21.09.11}
\left(\frac{\bm{s}_1^{\bm{a}}}{g(\bm{s}_1)}, \ldots, \frac{\bm{s}_n^{\bm{a}}}{g(\bm{s}_n)} \right)=
\left(\frac{s_{11}^{a_{1}} \cdots s_{1m}^{a_{m}} }{g_1(s_{11}) \cdots g_m(s_{1m})}, \ldots,
\frac{s_{n1}^{a_{1}} \cdots  s_{nm}^{a_{m}}  }{g_1(s_{n1}) \cdots g_m(s_{nm})} \right),
\end{equation}
where for $i\in [n],$ $\bm{s}_i=\left(s_{i1},\ldots,s_{im} \right) \in \F_{q^t}^m$ and $0\leq a_{j} < \deg (g_j)$ for $j\in [m].$ 

The proof consists of verifying that the elements in $\Gamma(\cal S, g)$ are orthogonal to the vectors shown in Equation~\ref{21.09.11}. We proceed by induction on $m.$ Consider the case $m=1.$ Assume $g(x)=\gamma_0 + \gamma_1 x+\dots + \gamma_k x^k.$ Equation~\ref{21.09.08} implies that if $\bm{c}=(c_1,\ldots,c_n)\in \Gamma(\cal S, g),$ then
\begin{eqnarray}
\sum_{i=1}^n \frac{c_i}{(x-s_{i})}
&=& \sum_{i=1}^n \frac{-c_i}{g(s_{i})}
\left(\begin{array}{c}
x^{k-1}\\
\vdots\\
x\\
1
\end{array}\right)^T
\left(\begin{array}{cccccc}
\gamma_{k} & 0 & \cdots & 0\\
\gamma_{k-1} & \gamma_k & \cdots & 0\\
\vdots & \vdots & \vdots& \vdots\\
\gamma_1 & \gamma_2 & \cdots & \gamma_k\\
\end{array}\right)
\left(\begin{array}{c}
1\\
s_i\\
\vdots\\
s_i^{k-1}
\end{array}\right)
\nonumber\\
&=&
\left(\begin{array}{c}
x^{k-1}\\
\vdots\\
x\\
1
\end{array}\right)^T
\left(\begin{array}{cccccc}
\gamma_{k} & 0 & \cdots & 0\\
\gamma_{k-1} & \gamma_k & \cdots & 0\\
\vdots & \vdots & \vdots& \vdots\\
\gamma_1 & \gamma_2 & \cdots & \gamma_k\\
\end{array}\right)
\sum_{i=1}^n \frac{-c_i}{g(s_{i})}
\left(\begin{array}{c}
1\\
s_i\\
\vdots\\
s_i^{k-1}
\end{array}\right)
\label{21.09.05}\\
&=& 0 \mod g(x). \label{21.09.06}
\end{eqnarray}
Observe that the polynomial in \rmv{in $\F_q[x]$ in Equation}~(\ref{21.09.05}) has degree $k-1$. As $\deg (g)=k,$ Equation~\ref{21.09.06} implies that the coefficients of the polynomial given in Equation~(\ref{21.09.05}) are zero. Hence, we see that
\begin{eqnarray*}
\left(\begin{array}{cccccc}
\gamma_{k} & 0 & \cdots & 0\\
\gamma_{k-1} & \gamma_k & \cdots & 0\\
\vdots & \vdots & \vdots& \vdots\\
\gamma_1 & \gamma_2 & \cdots & \gamma_k\\
\end{array}\right)
\left(\begin{array}{cccc}
\frac{1}{g(s_{1})}&\frac{1}{g(s_{2})} &\cdots & \frac{1}{g(s_{n})}\\
\frac{s_1}{g(s_{1})}&\frac{s_2}{g(s_{2})}&\cdots & \frac{s_n}{g(s_{n})}\\
\vdots & \vdots & \vdots & \vdots \\
\frac{s_1^{k-1}}{g(s_{1})} & \frac{s_2^{k-1}}{g(s_{2})} &\cdots& \frac{s_n^{k-1}}{g(s_{n})}
\end{array}\right)
\left(\begin{array}{c}
c_1\\
c_2\\
\vdots\\
c_n
\end{array}\right)
=
\left(\begin{array}{c}
0\\
0\\
\vdots\\
0
\end{array}\right).
\end{eqnarray*}
As the matrix in terms of $\gamma$'s is invertible, after we multiply both sides of previous equation by the inverse of this matrix, we see that the element $\bm{c}\in \Gamma(\cal S, g)$ is orthogonal to the vectors $\displaystyle \left(\frac{s_1^{a_1}}{g(s_1)}, \ldots, \frac{s_n^{a_1}}{g(s_n)} \right),$ where $0\leq a_1< k=\deg (g).$ These are the vectors that appear in Equation~(\ref{21.09.11}) when $m=1.$

Now we focus on the case $m=2.$ Assume $\deg(g_1)=k_1$ and $\deg(g_2)=k_2.$ \rmv{By definition, for $\cS=\{\bm{s}_1,\ldots,\bm{s}_n \} \subseteq \F_q^2,$ where $\bm{s}_i=(s_{i1},s_{i2}),$ and $g=g_1g_2,$ where $g_i \in \F_q[x_i],$ the multivariate Goppa code $\Gamma(\cal S, g)$ is given by
\[\Gamma(\cal S, g) = \left\{ (c_1,\ldots,c_n) \in \mathbb{F}_q^n : \sum_{i=1}^n \frac{c_i}{(x_1-s_{i1})(x_2-s_{i2})} = 0 \mod g({\bm x}) \right\}.\]}
By Equation~\ref{21.09.08}, there exist invertible matrices $A$ and $B,$ that depend of the coefficients of $g_1$ and $g_2,$ respectively, such that
\begin{eqnarray}
&&\sum_{i=1}^n \frac{c_i}{(x_1-s_{i1})(x_2-s_{i2})} \nonumber\\
&=& \sum_{i=1}^n \frac{c_i}{g_1(s_{i1})g_2(s_{i2})}
\left(\left(\begin{array}{c}
x_1^{k_1-1}\\
\vdots\\
x_1\\
1
\end{array}\right)
\otimes
\left(\begin{array}{c}
x_2^{k_2-1}\\
\vdots\\
x_2\\
1
\end{array}\right)
\right)^T
A\otimes B
\left(\begin{array}{c}
1\\
s_{i1}\\
\vdots\\
s_{i1}^{k_1-1}
\end{array}\right)
\otimes
\left(\begin{array}{c}
1\\
s_{i2}\\
\vdots\\
s_{i2}^{k_2-1}
\end{array}\right)
\nonumber\\
&=&
\left(\left(\begin{array}{c}
x_1^{k_1-1}\\
\vdots\\
x_1\\
1
\end{array}\right)
\otimes
\left(\begin{array}{c}
x_2^{k_2-1}\\
\vdots\\
x_2\\
1
\end{array}\right)
\right)^T
A\otimes B
\sum_{i=1}^n \frac{c_i}{g(\bm{s}_{i})}
\left(
\left(\begin{array}{c}
1\\
s_{i1}\\
\vdots\\
s_{i1}^{k_1-1}
\end{array}\right)
\otimes
\left(\begin{array}{c}
1\\
s_{i2}\\
\vdots\\
s_{i2}^{k_2-1}
\end{array}\right)
\right)
\nonumber\\
&=& 0 \mod g(\bm{x}).\nonumber
\end{eqnarray}
As $\deg_{x_1}(g)=k_1$ and $\deg_{x_2}(g)=k_2,$ the previous equation implies
\begin{eqnarray*}
A\otimes B
\sum_{i=1}^n \frac{c_i}{g(\bm{s}_{i})}
\left(
\left(\begin{array}{c}
1\\
s_{i1}\\
\vdots\\
s_{i1}^{k_1-1}
\end{array}\right)
\otimes
\left(\begin{array}{c}
1\\
s_{i2}\\
\vdots\\
s_{i2}^{k_2-1}
\end{array}\right)
\right)=
\left(\begin{array}{c}
0\\
0\\
\vdots\\
0
\end{array}\right).
\end{eqnarray*}
Multiplying both sides by the inverse $\left(A\otimes B\right)^{-1}=B^{-1}\otimes A^{-1},$ we finally obtain
\begin{eqnarray*}
\sum_{i=1}^n \frac{c_i}{g(\bm{s}_{i})}
\left(
\left(\begin{array}{c}
1\\
s_{i1}\\
\vdots\\
s_{i1}^{k_1-1}
\end{array}\right)
\otimes
\left(\begin{array}{c}
1\\
s_{i2}\\
\vdots\\
s_{i2}^{k_2-1}
\end{array}\right)
\right)=
\left(\begin{array}{c}
0\\
0\\
\vdots\\
0
\end{array}\right).
\end{eqnarray*}
We conclude that if $\bm{c}\in \Gamma(\cal S, g),$ then
$\bm{c}\cdot\left(\frac{s_{11}^{a_{1}}s_{12}^{a_{2}}}{g_1(s_{11})g_2(s_{12})},\ldots,
\frac{s_{n1}^{a_{1}}s_{n2}^{a_{2}}}{g_1(s_{n1})g_2(s_{n2})} \right) = 0,$ where $0\leq a_{j} < k_j=\deg(g_j),$ for $j\in [2].$ These are the vectors that appear in Equation~(\ref{21.09.11}), for $m=2.$ For the general case,  observe that following the steps of the case $m=2,$ we saw that $\sum_{i=1}^n \frac{c_i}{\prod_{j=1}^m(x_j-s_{ij})} = 0 \mod g({\bm x})$ implies that
\begin{eqnarray*}
\sum_{i=1}^n \frac{c_i}{g(\bm{s}_{i})}
\left(
\left(\begin{array}{c}
1\\
s_{i1}\\
\vdots\\
s_{i1}^{k_1-1}
\end{array}\right)
\otimes
\cdots
\otimes
\left(\begin{array}{c}
1\\
s_{in}\\
\vdots\\
s_{in}^{k_n-1}
\end{array}\right)
\right)=
\left(\begin{array}{c}
0\\
0\\
\vdots\\
0
\end{array}\right).
\end{eqnarray*}
From this fact, we conclude that if ${\bm c}\in \Gamma(\cal S, g),$ then ${\bm c}$ is orthogonal to the vectors that appear in Equation~(\ref{21.09.11}).
\end{proof}

Recall that given a code $C \subseteq \F_{q^t}^n,$ the subfield subcode over $\F_q$ is 
\[C_{q}:= \left\{ {\bm c\in C : {\bm c} \in \F_q^n} \right\}\] and the {\it field trace} with respect to the extension $\F_{q^t}^n/\F_q$  is defined as the map
$$
\begin{array}{llll}
tr: & \F_{q^t} & \rightarrow & \F_q \\
& a & \mapsto & a^{q^{t-1}} + \dots + a^{q^0}.
\end{array}
$$
The {\it trace code} of an $[n,k,d]$ code $C$ over $\F_{q^t}$ is defined by
\[\tr(C):=\left\{ \left(\tr(c_1),\ldots,\tr(c_n) \right) : (c_1,\ldots,c_n) \in C \right\}.\]
By \cite[Ch. 7. \textsection 7.]{MacWilliams-Sloane}, $\tr(C)$ is an $[n,k^*, d^*]$ over $\F_{q},$ where $k\leq k^* \leq tk$ and $d^*\geq d.$ According to Delsarte's Theorem \cite[Theorem 2]{Delsarte}, $C_q^{\perp} = tr \left( C^{\perp} \right)$. Putting this together with the fact that $
\Gamma(\cal S, g)=\left( \TGRSG(\cS,g)^{\perp} \right)_q,
$ as shown in Theorem \ref{21.10.24}, we obtain the following consequences. 

\begin{corollary} \label{mvG_dim}
The multivariate Goppa code $\Gamma(\cal S, g)$ has length $n=\mid \mathcal{S} \mid$ and dimension $k$ satisfying $n-t \deg( g) \leq k \leq n-\deg( g)$. Moreover, the dual is the trace code of a tensor product of generalized Reed-Solomon codes via Goppa codes, specifically,
$$
\Gamma(\cal S, g)^{\perp}=tr( \TGRSG(\cS,g)). \rmv{\text{ this is Delsarte's Theorem}}
$$
\rmv{I don't think that it is needed, but if you want to have an expression for the code itself, we can write:
$$
\Gamma(\cal S, g)=\left( \TGRSG(\cS,g)^{\perp} \right)_q,
$$ which agrees with Theorem~\ref{21.09.12}.}
\end{corollary}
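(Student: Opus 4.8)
The plan is to read off all three assertions from Theorem~\ref{21.10.24}, Delsarte's Theorem, and the standard bounds on the dimension of a trace code, so that no new computation is needed. The length claim is immediate: Theorem~\ref{21.10.24} presents $\Gamma(\cal S, g)$ as the subfield subcode over $\F_q$ of $\TGRSG(\cS,g)^{\perp}$, and since $\TGRSG(\cS,g)$ has length $n=|\cS|$ by Remark~\ref{21.10.30}(i), so does its dual, and hence $\Gamma(\cal S,g)\subseteq \F_q^n$.

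For the dual, I would apply Delsarte's identity $D_q^{\perp}=\tr(D^{\perp})$ to the $\F_{q^t}$-linear code $D:=\TGRSG(\cS,g)^{\perp}$. By Theorem~\ref{21.10.24} we have $D_q=\Gamma(\cal S,g)$, and since $\TGRSG(\cS,g)$ is linear over the field $\F_{q^t}$ the double-dual identity gives $D^{\perp}=\left(\TGRSG(\cS,g)^{\perp}\right)^{\perp}=\TGRSG(\cS,g)$. Substituting yields
\[
\Gamma(\cal S,g)^{\perp}=D_q^{\perp}=\tr(D^{\perp})=\tr\bigl(\TGRSG(\cS,g)\bigr),
\]
which is the asserted description of the dual.

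For the dimension, I would combine this with the dimension count in Remark~\ref{21.10.30}. The code $\TGRSG(\cS,g)$ has $\F_{q^t}$-dimension $\deg(g)$, so by the cited bound $k\le k^{\ast}\le tk$ for trace codes, the code $\tr\bigl(\TGRSG(\cS,g)\bigr)$ has some $\F_q$-dimension $k^{\ast}$ with $\deg(g)\le k^{\ast}\le t\deg(g)$. Since $\Gamma(\cal S,g)^{\perp}=\tr\bigl(\TGRSG(\cS,g)\bigr)$ has length $n$, the code $\Gamma(\cal S,g)$ has dimension $k=n-k^{\ast}$, and therefore $n-t\deg(g)\le k\le n-\deg(g)$.

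There is no serious obstacle here; the argument is essentially bookkeeping, and the only points requiring a moment of care are that Delsarte's identity must be applied to $D=\TGRSG(\cS,g)^{\perp}$ rather than to $\TGRSG(\cS,g)$ itself, and that one uses $(E^{\perp})^{\perp}=E$, which is valid because $\TGRSG(\cS,g)$ is linear over the field $\F_{q^t}$. It is worth noting that the lower bound $\dim\tr(D)\ge\dim D$ is precisely what produces $k\le n-\deg(g)$, recovering in this multivariate setting the classical fact that a Goppa code has codimension at most $\deg(g)$.
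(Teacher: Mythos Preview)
Your proof is correct and follows essentially the same route as the paper: apply Delsarte's identity to $D=\TGRSG(\cS,g)^{\perp}$ using Theorem~\ref{21.10.24}, then feed the dimension $\deg(g)$ of $\TGRSG(\cS,g)$ from Remark~\ref{21.10.30} into the standard trace-code bounds $k\le k^{\ast}\le tk$ to obtain the stated inequalities on $\dim\Gamma(\cS,g)$. One tiny notational slip in your closing remark: the bound you invoke there is $\dim\tr(\TGRSG(\cS,g))\ge\dim\TGRSG(\cS,g)$, not $\dim\tr(D)\ge\dim D$ for your $D=\TGRSG(\cS,g)^{\perp}$, but this does not affect the argument.
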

\begin{example}\rm
Assume $\F_{3^2}^*=\left< a\right>$  is the multiplicative group of the finite field $\F_{3^2}$. Take $S_1=S_2=\left\{a^i : i\in[8] \right\}$ and $g_1=g_2 =x^2+a.$ Using the coding theory package~\cite{cod_package} for Macaulay2 \cite{Mac2}, and Magma \cite{magma}, we obtain that $\Gamma(\cal S, g)$ is an $[64,56,4]$ code over $\F_3$, which has parameters matching the best known linear code of length $64$ and dimension $56$ over this field  \cite{code_tables}. If we were to restrict ourselves to taking $m=1$, then $64 \leq q^t = 3^t$ requires $t \geq 4$ to obtain a code of length $64$. Furthermore, $4=\deg( g) +1$ implies $\deg( g) =3$. Consequently, we are only guaranteed that such a code has dimension $64-4\deg( g) = 64-12=52$.
\end{example}
In the next section, we will gain another perspective on the multivariate Goppa codes. It will allow us to round out Corollary \ref{mvG_dim} by describing the minimum distance of the multivariate Goppa codes.

\section{As subfield subcodes of augmented codes} \label{relations_section}
In this section,  we show that every multivariate Goppa code is a subfield subcode of an augmented Cartesian code \cite{lmv,aug_isit}. This useful property allows us to determine the minimum distance of the multivariate Goppa codes and establishes the necessary results for determining hulls in Section \ref{hulls_section}.

We review first the necessary facts on augmented Cartesian codes. For a lattice point $\bm{a}\in\N^m,$ $\bm{x}^{\bm{a}}=x_1^{a_1}\cdots x_m^{a_m}$ denotes the corresponding monomial in $\F_{q^t}[{\bm x}].$ The {\it graded-lexicographic order} $\prec$ on the set of monomials of $\F_{q^t}[{\bm x}]$ which is defined as $x_1^{a_1}\cdots x_m^{a_m}\prec x_1^{b_1}\cdots x_m^{b_m}$ if and only if $\sum_{i=1}^m a_i<\sum_{i=1}^m b_i$ or $\sum_{i=1}^m a_i=\sum_{i=1}^m b_i$ and the leftmost nonzero entry in $(b_1-a_1,\ldots,b_m-a_m)$ is positive. The ideal generated by $f_1, \dots, f_r \in \F_{q^t}[{\bm x}]$ is denoted $(f_1, \dots, f_r) \subseteq \F_{q^t}[{\bm x}]$. The subspace of polynomials of $\F_{q^t}[{\bm x}]$ that are $\F_{q^t}$-linear combinations of monomials $\bm{x}^{\bm{a}}\in\F_{q^t}[{\bm x}],$ where $\bm{a} \in \mathcal{A}\subseteq \N^m$, is denoted by $\cL(\cA),$ {\it i.e.}
\[\cL(\cA) :=\spn_{\F_{q^t}}\{\bm{x}^{\bm{a}} : \bm{a} \in \cA \}\subseteq \F_{q^t}[{\bm x}].\]
Together, the Cartesian product $\cS=\{\bm{s}_1,\ldots,\bm{s}_n\}\subseteq \F_{q^t}^m,$ the lattice points $\cA,$ and a polynomial $h\in \F_{q^t}[{\bm x}]$ such that $h(\bm{s})\neq 0,$ for all $\bm{s} \in \cS,$ define the {\it evaluation map}
\[
\begin{array}{lclc}
\ev({\cS},h) \colon & \cL(\cA) & \to & \F_{q^t}^{|\mathcal{\cS}|}\\
&f &\mapsto &
\left(\frac{\displaystyle f (\bm{s}_1)}{\displaystyle h (\bm{s}_1)},\ldots,\frac{\displaystyle f(\bm{s}_n)}{\displaystyle h(\bm{s}_n)}\right).
\end{array}
\]
The image of the evaluation map $\ev(\cS,h)(\cL(\cA)),$ called the {\it generalized monomial-Cartesian} code associated with $\cS, \cA,$ and $h,$ is denoted by $\cC(\cS,\cA,h)\subseteq \F_{q^t}^{|\cS|}$:
\begin{equation}\label{21.10.26}
\cC(\cS,\cA,h)= \left\{ \left(\frac{\displaystyle f (\bm{s}_1)}{\displaystyle h (\bm{s}_1)},\ldots,\frac{\displaystyle f(\bm{s}_n)}{\displaystyle h(\bm{s}_n)}\right) : f \in \cL(\cA) \right\}.
\end{equation}

We may assume that $\deg_{x_j} (h) < n_j$ for all $j\in[m]$. To see this,  consider the polynomial
\begin{equation}\label{03-24-18}
L_j(x_j) :=\prod_{\substack{s\in S_j}}\left(x_j-s\right)
\end{equation}for each $j\in[m]$.
By \cite[Lemma 2.3]{lopez-villa}, the {\it vanishing ideal} of $\mathcal{S},$ consisting of all polynomials of $\F_{q^t}[\bm{x}]$ that vanish on ${\mathcal{S}},$ is given by
$
I(\mathcal{S})=
\left(L_1(x_1),\ldots,L_m(x_m)\right).
$
 According to \cite[Proposition 4]{CLO1}, $\left\{L_1(x_1),\ldots,L_m(x_m)\right\}$ is a Gr\"obner basis of $I(\mathcal{S})$, relative to the graded-lexicographic order $\prec$. Let $r$ be the remainder of $h$ modulo $I(\cal S).$ As $r({\bm s_i})=h({\bm s_i})$ for all $i\in [n]$ and $\deg_{x_j}(r)<\deg(L_j)=n_j,$ we may redefine $h:=r$. Consequently,   $\deg_{x_j}(h) < n_j.$
 By the same reasoning, we will assume that the degree of each $f\in \cL(\cA)$ in $x_i$ is less than $n_i$; i.e., we consider $\cA \subseteq \prod_{i=1}^m\{0,\ldots, n_i-1 \}.$ In this case, the evaluation map $\ev(\cS,h)$ is injective. Thus, the length and rate of the monomial-Cartesian code $\cC(\cS,\cA,h)$ are given by $|\mathcal{S}|$ and $\frac{|\mathcal{A}|}{|\mathcal{S}|},$ respectively \cite[Proposition 2.1]{mcc}. If $m=1$ and $\cA=\{ 0, 1, \dots, k-1\}$, then $\cC(\cS,\cA,h)=GRS(S, k,h)$, the generalized Reed-Solomon code described in Section \ref{mvGoppa_section}.

A key characteristic of the monomial-Cartesian codes is that commutative algebra methods may be used to study them. The kernel of the evaluation map $\ev({\cS},h)$ is precisely $\cL(\cA) \cap I(\mathcal{S})$, where $I(\mathcal{S})$ is the vanishing ideal. Thus, algebraic properties of
$\F_{q^t}[{\bm x}]/\left( \cL(\cA) \cap I(\mathcal{S}) \right)$ are related to the basic parameters of $\cC(\cS,\cA,h).$
We now define the polynomial
\begin{equation}\label{09.07.21}
L({\bm x}):=\prod_{j=1}^m L_j^\prime(x_j),
\end{equation}
where $L_j^\prime(x_j)$ denotes the formal derivative of $L_j(x_j),$ defined in Equation~(\ref{03-24-18}). The polynomial $L({\bm x})$ plays an important role in determining the dual code $\cC(\cS,\cA,h)^\perp,$ which was studied in \cite{mcc} in terms of the vanishing ideal of $\mathcal{S}$ and in \cite{LopezDual} in terms of the indicator functions of $\mathcal{S}.$

Given $f_1,f_2 \in \F_{q^t}[\bm{x}]$ such that $f_2(\bm{s})\neq 0,$ for all $\bm{s} \in \cS,$ we write $\frac{\displaystyle f_1}{\displaystyle f_2} \in \F_{q^t}[\bm{x}]$ to mean the unique polynomial whose value at $\bm{s}$ is $\frac{\displaystyle f_1(\bm{s})}{\displaystyle f_2(\bm{s})},$ for all $\bm{s} \in \cS,$ and $\deg_{x_j}\left(\frac{\displaystyle f_1}{\displaystyle f_2}\right) < n_j.$ Observe that this polynomial is unique as it is a linear combination of indicator functions form by standard monomials. See \cite[Proposition 4.6 (a)]{LopezDual} for a more detailed explanation about this fact and these concepts.

\rmv{When the set $\cL(\cA)$ is closed under divisibility, meaning $\cL(\cA)$ satisfies the property that if $M\in \cL(\cA)$ and $M^\prime$ divides $M,$ then $M^\prime \in \cL(\cA),$ the code $\cC(\cS,\cA,h)$ is called {\it decreasing monomial-Cartesian} codes. This family of codes is important because... \textcolor{red}{write more about this}.} We now define augmented Cartesian codes, a particular family of decreasing monomial-Cartesian codes meaning that they are defined by $\cA$ such that if $M\in \cL(\cA)$ and $M^\prime$ divides $M,$ then $M^\prime \in \cL(\cA)$ \cite{CLMS}.

\begin{definition}\label{21.10.21} \rm
Let $\cS\subseteq \F_{q^t}^m$ and $\displaystyle h \in \F_{q^t}[\bm{x}]$ be  such that $h(\bm{s})\neq 0,$ for all $\bm{s} \in \cS$. An {\it augmented Cartesian code} ($\acar$ code) is defined by
\[ \acar(\mathcal{S}, {\bm k},h) := \mathcal{C}(\mathcal{S}, \cA_{Car}({\bm k}),h), \]
where $\bm{k} = (k_1, \dots, k_m),$ with $ 0 \leq k_j \leq n_j,$ and
\[\mathcal{A}_{Car}(\bm{k}) := \prod_{j = 1}^m \left\{0,\ldots,n_j-1\right\}  \setminus \prod_{j = 1}^m\left\{k_j,\ldots,n_j-1\right\}.\]
We also define
\[ \acar\left(\mathcal{S},{\displaystyle h}\right) := \acar\left(\mathcal{S}, {\bm k}_h,\frac{\displaystyle L}{\displaystyle h}\right),\]
where ${\bm k}_h := \left(n_1-\deg_{x_1}(h), \ldots, n_m-\deg_{x_m}(h) \right).$
\end{definition}
The augmented Cartesian code $\acar(\mathcal{S}, {\bm k},h),$ where $h \in \F_{q^t} \setminus \{ 0 \},$ was recently introduced and studied in \cite{lmv} due to its local properties. An augmented Cartesian code is shown in Example~\ref{21.01.01}.

\begin{remark}\label{21.10.22}
Observe that $\acar(\mathcal{S}, {\bm k},h) = \F_{q^t}^n$ if and only if $k_j=n_j,$ for some $j\in [m].$ In addition, $\acar(\mathcal{S}, {\bm k},h) = \{{\bm 0}\} $ if and only if ${\bm k} = {\bm 0}.$
\end{remark}
By previous remark, there are instances where $\acar(\mathcal{S}, {\bm k},h)$ may be one of the trivial spaces $\{ \bm{0} \}$ or $\F_{q^t}^n.$ In these cases, their basic parameters are also trivial. For the case when $\acar(\mathcal{S}, {\bm k},h)$ is nontrivial, we have the following result.
\begin{lemma}\label{21.09.03}
The augmented Cartesian code $\acar(\mathcal{S}, {\bm k},h)$ has the following basic parameters.
\begin{itemize}
\item[\rm (i)] Length $n=\mid \mathcal{S} \mid.$
\item[\rm (ii)] Dimension $k=\prod_{j = 1}^m n_j - \prod_{j = 1}^m (n_j - k_j).$
\item[\rm (iii)] Minimum distance $d=\min \left\{n_j - k_j + 1\right\}_{j\in [m]}.$
\end{itemize}
In particular, $\acar(\mathcal{S}, h)$ is an $[n, n-\deg (h), \min \{ \deg_{x_j} (h) +1 : j \in [m] \}]$ code over $\F_{q^t}.$ Moreover, the dual of the augmented Cartesian code is  $\acar(\mathcal{S}, {\bm k},h)^{\perp} =\mathcal{C}\left( \mathcal{S}, \mathcal{A}_{Car}^\perp(\bm{k}), \frac{L}{h} \right),$ where $\mathcal{A}_{Car}^\perp(\bm{k})= \prod_{j=1}^m\left\{0,\ldots,n_j-k_j-1\right\}.$
\end{lemma}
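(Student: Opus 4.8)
The plan is to read off (i)--(iii) from the combinatorics of the monomial set $\cA_{Car}(\bm k)$, to reduce the minimum-distance computation to the case $h\equiv 1$ by a diagonal monomial equivalence, and to deduce the dual from the duality theory of monomial-Cartesian codes; I assume $1\le k_j\le n_j$ for all $j$, the range in which (iii) holds as stated. For (i) and (ii): since we have arranged $\cA_{Car}(\bm k)\subseteq\prod_{j=1}^m\{0,\dots,n_j-1\}$, the evaluation map $\ev(\cS,h)$ is injective on $\cL(\cA_{Car}(\bm k))$, because a nonzero $f\in\cL(\cA_{Car}(\bm k))$ has leading monomial $\bm x^{\bm b}$ with $b_j<n_j$ for all $j$, hence divisible by none of $x_1^{n_1},\dots,x_m^{n_m}$, the leading monomials of the Gr\"obner basis $\{L_1,\dots,L_m\}$ of $I(\cS)$, so $f\notin I(\cS)$. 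Thus the length is $|\cS|=\prod_j n_j=n$, and the dimension is $\dim_{\F_{q^t}}\cL(\cA_{Car}(\bm k))=|\cA_{Car}(\bm k)|=\prod_j n_j-\prod_j(n_j-k_j)$, since $\prod_j\{k_j,\dots,n_j-1\}$ is a subset of $\prod_j\{0,\dots,n_j-1\}$ of cardinality $\prod_j(n_j-k_j)$.

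For the minimum distance, I would first note that coordinatewise division by the nonzero scalars $h(\bm s_1),\dots,h(\bm s_n)$ is a diagonal monomial equivalence, so $\acar(\cS,\bm k,h)$ and $\ev(\cS,1)\big(\cL(\cA_{Car}(\bm k))\big)$ have the same weight distribution; hence it suffices to treat $h\equiv 1$. For the upper bound, fix $j$, choose $t_i\in S_i$ for $i\ne j$ and $Z_j\subseteq S_j$ with $|Z_j|=k_j-1$, and set $f:=\big(\prod_{i\ne j}\prod_{s\in S_i\setminus\{t_i\}}(x_i-s)\big)\cdot\prod_{s\in Z_j}(x_j-s)$. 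Every monomial $\bm x^{\bm a}$ of $f$ has $a_j\le k_j-1$, hence $\bm a\notin\prod_\ell\{k_\ell,\dots,n_\ell-1\}$, so $\bm a\in\cA_{Car}(\bm k)$ and $f\in\cL(\cA_{Car}(\bm k))$; moreover $f(\bm s)\ne 0$ exactly when $s_i=t_i$ for all $i\ne j$ and $s_j\notin Z_j$, that is, at $n_j-k_j+1$ points of $\cS$. Hence $d\le\min_j(n_j-k_j+1)$.

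The lower bound is the step I expect to be the main obstacle; I would obtain it from the footprint (Gr\"obner-basis) bound. Let $0\ne f\in\cL(\cA_{Car}(\bm k))$ (it is already reduced modulo $I(\cS)$) with leading monomial $\bm x^{\bm b}$, so $\bm b\in\cA_{Car}(\bm k)$. The leading-term ideal of $(f)+I(\cS)$ contains $(x_1^{n_1},\dots,x_m^{n_m},\bm x^{\bm b})$, so the footprint of $(f)+I(\cS)$ is contained in $\{\bm a\in\prod_i\{0,\dots,n_i-1\}:\bm x^{\bm b}\nmid\bm x^{\bm a}\}$, a set of cardinality $\prod_i n_i-\prod_i(n_i-b_i)$; hence the number of zeros of $f$ on $\cS$ is at most $\prod_i n_i-\prod_i(n_i-b_i)$, so $|\supp(\ev(\cS,1)(f))|\ge\prod_i(n_i-b_i)$. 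Since $\bm b\in\cA_{Car}(\bm k)$, there is an index $j$ with $b_j\le k_j-1$, so $n_j-b_j\ge n_j-k_j+1$, while $n_i-b_i\ge 1$ for every $i$; therefore $\prod_i(n_i-b_i)\ge n_j-k_j+1\ge\min_\ell(n_\ell-k_\ell+1)$. This gives $d\ge\min_\ell(n_\ell-k_\ell+1)$, which together with the upper bound proves (iii). (Alternatively, for $h\equiv 1$ this is the minimum distance of an augmented affine Cartesian code and may be quoted from \cite{lmv}.)

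For the dual, I would invoke the duality of monomial-Cartesian codes from \cite{mcc,LopezDual}, namely $\cC(\cS,\cA,h)^\perp=\cC\big(\cS,\cA^\star,\tfrac{L}{h}\big)$, where $\cA^\star=\{\bm a\in\prod_j\{0,\dots,n_j-1\}:(n_1-1-a_1,\dots,n_m-1-a_m)\notin\cA\}$. With $\cA=\cA_{Car}(\bm k)$, the condition $(n_j-1-a_j)_j\notin\cA_{Car}(\bm k)$ amounts to $(n_j-1-a_j)_j\in\prod_j\{k_j,\dots,n_j-1\}$, i.e.\ to $0\le a_j\le n_j-k_j-1$ for every $j$; hence $\cA_{Car}(\bm k)^\star=\prod_j\{0,\dots,n_j-k_j-1\}=\cA_{Car}^\perp(\bm k)$, as claimed. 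Finally, the ``in particular'' statement is the specialization $\bm k=\bm k_h$ and $h\mapsto L/h$: since $(k_h)_j=n_j-\deg_{x_j}(h)$, part (ii) gives dimension $\prod_j n_j-\prod_j\deg_{x_j}(h)=n-\deg(h)$ and part (iii) gives minimum distance $\min_j\big((n_j-(k_h)_j)+1\big)=\min_j(\deg_{x_j}(h)+1)$.
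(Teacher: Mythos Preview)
Your proof is correct. The paper's own argument is essentially a string of citations: it reduces to $h=1$ by monomial equivalence, then invokes \cite[Proposition 3.3]{lmv} for the dimension, identifies the generating set $\mathcal{B}=\{x_1^{n_1-1}\cdots x_m^{n_m-1}/x_j^{n_j-k_j}:k_j>0\}$ and appeals to \cite[Theorem 3.9(iii)]{CLMS} for the minimum distance, and finally cites \cite[Proposition 3.3]{lmv} and \cite[Theorem 3.3]{CLMS} for the dual. By contrast you give self-contained arguments for (ii) and (iii): a direct cardinality count for the dimension, an explicit low-weight codeword for the upper bound on $d$, and the footprint bound for the lower bound. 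Your footprint argument is in effect the content underlying the cited result in \cite{CLMS}, so your route unpacks that citation rather than diverging from it; this buys transparency and independence from the companion papers at the cost of a few extra lines. For the dual, both you and the paper appeal to external duality results for monomial-Cartesian codes (you to \cite{mcc,LopezDual}, the paper to \cite{lmv,CLMS}); your computation of $\cA_{Car}(\bm k)^\star$ via the involution $a_j\mapsto n_j-1-a_j$ agrees with the paper's stated $\cA_{Car}^\perp(\bm k)$.
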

\begin{proof}
The length of the code is apparent from the definition.  Since $\acar(\mathcal{S}, {\bm k},h)$ is monomially equivalent to $\acar(\mathcal{S}, {\bm k},1),$ we can assume that $h=1$ for (ii) and (iii). Then (ii) is proven in \cite[Proposition 3.3]{lmv}. To prove (iii), note that Remark~\ref{21.10.22}, $\acar(\mathcal{S} {\bm k},h)$ gives $\{{\bm 0}\}$ if and only if ${\bm k} = {\bm 0}.$ Assume ${\bm k} \neq {\bm 0}.$ Following \cite[Definition 3.5]{CLMS}, a {\it generating set} of $\acar(\mathcal{S}, {\bm k},h)$ is defined as a set of monomials $\mathcal{B}\subseteq \mathcal{A}_{Car}(\bm{k})$ with the property that for every monomial $M\in \mathcal{A}_{Car}(\bm{k}),$ there exists a monomial $M^\prime \in \mathcal{B}$ such that $M$ divides $M^\prime.$ Observe that $\mathcal{B}=\left\{\displaystyle \frac{\displaystyle x_1^{n_1-1}\cdots  x_m^{n_m-1}}{x_j^{n_j-k_j}} : k_j > 0 \right\}$ is a generating set of $\acar(\mathcal{S}, {\bm k},h).$ Thus, the result follows from \cite[Theorem 3.9 (iii)]{CLMS}.
Finally, the dual is a consequence of the proof of the case $h=1$, given in \cite[Proposition 3.3]{lmv}, and \cite[Theorem 3.3]{CLMS}.
\end{proof}

\begin{example}\label{21.01.01} \rm
Take $K=\mathbb{F}_{17}$ and $h=1$. Let $S_1, S_2 \subseteq K$ with $n_1 = |S_1| = 6$ and $n_2 = |S_2| = 7$. The code $\acar(S_1 \times S_2, (2,2),h)$ is generated by the vectors $\ev(S_1 \times S_2,h)(\text{\textcolor{red}{$M$}}),$ where \textcolor{red}{$M$} is a point in Figure~\ref{ExACar1} (a). The dual code $\acar(S_1 \times S_2, (2,2),h)^{\perp}$ is generated by the vectors $\ev(S_1 \times S_2,L)(\text{\textcolor{blue}{$M$}}),$ where \textcolor{blue}{$M$} is a point in Figure~\ref{ExACar1} (b).
\begin{figure}[h]
\vskip 0cm
\noindent
\begin{minipage}[t]{0.45\textwidth}
\begin{center}
\begin{tikzpicture}[scale=0.55]
\draw [-latex] (0,0) -- (5.5,0)node[right] {$A_1$};
\draw [dashed] (0,1)node[left]{1} -- (5,1)node[right] {};
\draw [dashed] (0,2)node[left]{2} -- (5,2)node[right] {};
\draw [dashed] (0,3)node[left]{3} -- (5,3)node[right] {};
\draw [dashed] (0,4)node[left]{4} -- (5,4)node[right] {};
\draw [dashed] (0,5)node[left]{5} -- (5,5)node[right] {};
\draw [dashed] (0,6)node[left]{6} -- (5, 6)node[right] {};
\draw [-latex] (0,0)node[below left]{0} -- (0,6.5)node[above] {$A_2$};
\draw [dashed] (1,0)node[below]{1} -- (1,6)node[right] {};
\draw [dashed] (2,0)node[below]{2} -- (2,6)node[right] {};
\draw [dashed] (3,0)node[below]{3} -- (3,6)node[right] {};
\draw [dashed] (4,0)node[below]{4} -- (4,6)node[right] {};
\draw [dashed] (5,0)node[below]{5} -- (5,6)node[right] {};

\foreach \i in {0,...,5}
{\foreach \j in {0,...,1}
{\fill [color=red](\i,\j) {circle(.2cm)};}}
\foreach \i in {0,...,1}
{\foreach \j in {2,...,6}
{\fill [color=red](\i,\j) {circle(.2cm)};}}
\end{tikzpicture}
\vskip 0cm
(a)
\end{center}
\end{minipage}
\begin{minipage}[t]{0.45\textwidth}
\begin{center}
\begin{tikzpicture}[scale=0.55]
\draw [-latex] (6,6) -- (6,-0.5)node[below] {$A_2$};
\draw [dashed] (1,0)node[left]{} -- (6,0)node[right] {6};
\draw [dashed] (1,1)node[left]{} -- (6,1)node[right] {5};
\draw [dashed] (1,2)node[left]{} -- (6,2)node[right] {4};
\draw [dashed] (1,3)node[left]{} -- (6,3)node[right] {3};
\draw [dashed] (1,4)node[left]{} -- (6,4)node[right] {2};
\draw [dashed] (1,5)node[left]{} -- (6,5)node[right] {1};
\draw [dashed] (1,6)node[left]{} -- (6,6)node[right] {};
\draw [-latex] (6,6)node[above right]{0} -- (0.5,6)node[left] {$A_1$};
\draw [dashed] (1,0)node[below]{} -- (1,6)node[above] {5};
\draw [dashed] (2,0)node[below]{} -- (2,6)node[above] {4};
\draw [dashed] (3,0)node[below]{} -- (3,6)node[above] {3};
\draw [dashed] (4,0)node[below]{} -- (4,6)node[above] {2};
\draw [dashed] (5,0)node[below]{} -- (5,6)node[above] {1};
\draw [dashed] (6,0)node[below]{} -- (6,6)node[above] {};
\foreach \i in {3,...,6}
{\foreach \j in {2,...,6}
{\fill [color=blue](\i,\j) {circle(.2cm)};}}
\end{tikzpicture}
\vskip 0cm
(b)
\end{center}
\end{minipage}
\caption{ The code $\acar(S_1 \times S_2, (2,2),h),$ with $h=1$ and $K=\mathbb{F}_{17}$ in Example \ref{21.01.01} is generated by the vectors the vectors $\ev(S_1 \times S_2,h)(\text{\textcolor{red}{$M$}}),$ where \textcolor{red}{$M$} is a point in (a). The dual code $\acar(S_1 \times S_2, (2,2),h)^{\perp}$ is generated by the vectors $\ev(S_1 \times S_2,L)(\text{\textcolor{blue}{$M$}}),$ where \textcolor{blue}{$M$} is a point in (b).}
\label{ExACar1}
\end{figure}
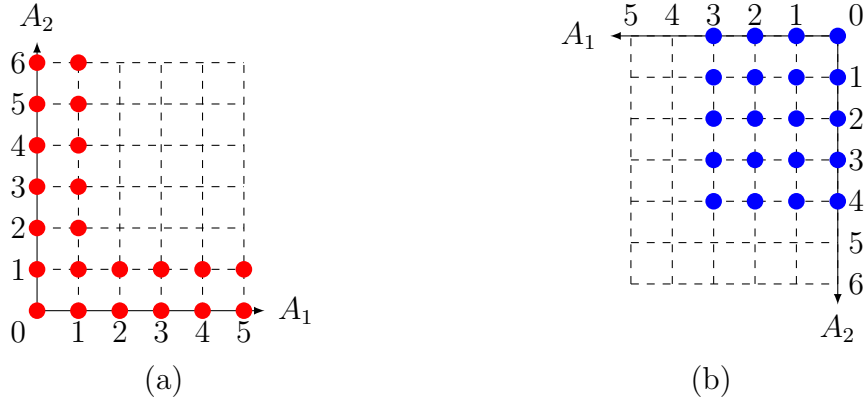
\end{example}

We now show that the dual of the tensor product of generalized Reed-Solomon codes is an augmented Cartesian code.
\begin{theorem}\label{21.09.10}
Given a tensor product of {generalized} RS codes $\TGRSG(\cS,{\bm k},g)$, its dual is
\[\TGRSG(\cS,{\bm k},g)^\perp =
\acar\left(\mathcal{S}, {\bm k}',\frac{L}{g}\right),\]
where $\bm{k}' := (n_1-k_1, \dots, n_m-k_m)$. In particular,
$\TGRSG(\cS,g)^\perp = \acar\left(\mathcal{S},{\displaystyle g}\right).$
\end{theorem}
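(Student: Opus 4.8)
The plan is to establish the general identity $\TGRSG(\cS,{\bm k},g)^\perp = \acar(\mathcal{S},{\bm k}',L/g)$ by reducing everything to the one-variable case and then tensoring. For a single variable, $\GRS(S_j,k_j,g_j)$ is the monomial-Cartesian code $\cC(S_j,\{0,\dots,k_j-1\},g_j)$, and its dual is known (for instance from the $m=1$ case of Lemma~\ref{21.09.03}, or the classical duality for GRS codes) to be $\cC(S_j,\{0,\dots,n_j-k_j-1\},L_j'/g_j)$, where $L_j'$ is the derivative of $L_j(x_j)=\prod_{s\in S_j}(x_j-s)$. So in one variable the statement is exactly the description of the dual of a GRS code as a (shortened) GRS code with evaluation scaling by $L_j'/g_j$; this is precisely what is already recorded for $\acar$ in Lemma~\ref{21.09.03}, since $\acar(S_j,(n_j-k_j),L_j'/g_j)$ in one variable equals $\cC(S_j,\{0,\dots,n_j-k_j-1\},L_j'/g_j)$.

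The second ingredient is the behaviour of duals under Kronecker/tensor products of codes. I would invoke the standard fact that if $\cC_1\subseteq\F_{q^t}^{n_1}$ and $\cC_2\subseteq\F_{q^t}^{n_2}$ then $(\cC_1\otimes\cC_2)^\perp = \cC_1^\perp\otimes\F_{q^t}^{n_2} + \F_{q^t}^{n_1}\otimes\cC_2^\perp$. Applying this iteratively to $\TGRSG(\cS,{\bm k},g)=\bigotimes_{j=1}^m\GRS(S_j,k_j,g_j)$ gives
\[
\TGRSG(\cS,{\bm k},g)^\perp \;=\; \sum_{j=1}^m \F_{q^t}^{n_1}\otimes\cdots\otimes \GRS(S_j,k_j,g_j)^\perp \otimes\cdots\otimes\F_{q^t}^{n_m}.
\]
Now substitute the one-variable dual computation: each summand is $\F_{q^t}^{n_1}\otimes\cdots\otimes\cC(S_j,\{0,\dots,n_j-k_j-1\},L_j'/g_j)\otimes\cdots\otimes\F_{q^t}^{n_m}$. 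Because all factors share the common evaluation scaling coming from $L/g=\prod_j L_j'/g_j$ (on the trivial factors the polynomial $L_i'/g_i$ is irrelevant to the ambient $\F_{q^t}^{n_i}$, which is $\cC(S_i,\{0,\dots,n_i-1\},\text{anything})$), the right-hand side is exactly the monomial-Cartesian code $\cC(\mathcal{S},\cA,L/g)$ where $\cA=\bigcup_{j=1}^m \bigl(\prod_{i<j}\{0,\dots,n_i-1\}\times\{0,\dots,n_j-k_j-1\}\times\prod_{i>j}\{0,\dots,n_i-1\}\bigr)$. A direct set-theoretic check shows this union of "boxes" equals $\prod_{j}\{0,\dots,n_j-1\}\setminus\prod_j\{n_j-k_j,\dots,n_j-1\} = \mathcal{A}_{Car}({\bm k}')$ with ${\bm k}'=(n_1-k_1,\dots,n_m-k_m)$: a lattice point avoids the "top box" in coordinate $j$ precisely when $a_j\le n_j-k_j-1$, which is membership in the $j$-th piece of the union. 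Hence $\TGRSG(\cS,{\bm k},g)^\perp=\acar(\mathcal{S},{\bm k}',L/g)$.

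Finally, specialize to ${\bm k}=(\deg(g_1),\dots,\deg(g_m))$, so ${\bm k}'=(n_1-\deg_{x_1}(g),\dots,n_m-\deg_{x_m}(g))={\bm k}_g$; then $\acar(\mathcal{S},{\bm k}_g,L/g)=\acar(\mathcal{S},g)$ by Definition~\ref{21.10.21}, giving $\TGRSG(\cS,g)^\perp=\acar(\mathcal{S},g)$. I expect the main obstacle to be the bookkeeping in the combinatorial identity $\bigcup_j(\text{box}_j)=\mathcal{A}_{Car}({\bm k}')$ and making sure the evaluation-scaling polynomial genuinely multiplies out as $L/g=\prod_j L_j'/g_j$ consistently across all tensor factors — in particular checking that the reductions modulo $I(\mathcal{S})$ (so that $\deg_{x_j}<n_j$) are compatible with the tensor structure. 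The duality-under-tensor-product identity and the one-variable GRS dual are standard and can be cited, so the real work is confirming these two compatibility points.
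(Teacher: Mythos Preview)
Your argument is correct, but the paper takes a shorter route. Rather than decomposing via the tensor-duality identity $(\cC_1\otimes\cC_2)^\perp=\cC_1^\perp\otimes\F_{q^t}^{n_2}+\F_{q^t}^{n_1}\otimes\cC_2^\perp$ and then reassembling the union of boxes into $\mathcal{A}_{Car}({\bm k}')$, the paper simply invokes Lemma~\ref{21.09.03}, which already records that $\acar(\mathcal{S},{\bm k}',L/g)^\perp=\cC(\mathcal{S},\prod_j\{0,\dots,k_j-1\},g)$, and then observes that the generators of this monomial-Cartesian code (the vectors $(\bm{s}_i^{\bm a}/g(\bm{s}_i))_i$ with $0\le a_j<k_j$) are exactly the generators of $\TGRSG(\cS,{\bm k},g)$ listed in Equation~(\ref{gen_matrix_T}). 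Taking duals on both sides finishes. In other words, the paper computes the dual of the \emph{right}-hand side and recognizes it as the left-hand side, while you compute the dual of the \emph{left}-hand side and recognize it as the right-hand side. Your approach has the virtue of explaining structurally \emph{why} the augmented shape $\mathcal{A}_{Car}({\bm k}')$ appears (as a union of slabs coming from the tensor-dual formula), and is essentially self-contained once one grants the one-variable GRS dual; the paper's approach is faster because all the multivariate duality bookkeeping has already been absorbed into Lemma~\ref{21.09.03} (whose proof in turn cites \cite{lmv,CLMS}).
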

\begin{proof}
By Lemma~\ref{21.09.03}, the dual of the augmented Cartesian code $\acar\left(\mathcal{S}, {\bm k}^\prime,\frac{\displaystyle L}{\displaystyle g}\right)$ is $\mathcal{C}(\mathcal{S}, \mathcal{A}_{Car}^\perp(\bm{k}), g ),$ where $\mathcal{A}_{Car}^\perp(\bm{k})= \prod_{j=1}^m\left\{0,\ldots,k_j-1\right\}.$ Observe that $\mathcal{C}(\mathcal{S}, \mathcal{A}_{Car}^\perp(\bm{k}), g )$ is generated by the vectors
\begin{equation*}
\left(\frac{\bm{s}_1^{\bm{a}}}{g(\bm{s}_1)}, \ldots, \frac{\bm{s}_n^{\bm{a}}}{g(\bm{s}_n)} \right)=
\left(\frac{s_{11}^{a_{1}} \cdots s_{1m}^{a_{m}}}{g_1(s_{11}) \cdots g_m(s_{1m})}, \ldots,
\frac{s_{n1}^{a_{1}} \cdots s_{nm}^{a_{m}}}{g_1(s_{n1}) \cdots g_m(s_{nm})} \right),
\end{equation*}
where for $i\in [n],$ $\bm{s}_i=\left(s_{i1},\ldots,s_{im} \right),$ and for $j\in [m],$ $0\leq a_{j} < k_j.$ The result follows from the fact that these vectors also generate $\TGRSG(\cS,{\bm k},g).$

The particular case $\TGRSG(\cS,g)^\perp = \acar\left(\mathcal{S},{\displaystyle g}\right)$ is obtained when we take $k_j=\deg(g_j),$ for $j\in [m].$
\end{proof}
\rmv{\begin{corollary}\label{21.10.23}
$\TGRSG(\cS,g)^\perp =
\acar\left(\mathcal{S},{\displaystyle g}\right).$
\end{corollary}
\begin{proof}
This is a consequence of Theorem~\ref{21.09.10} when we take $k_j=\deg(g_j),$ for $j\in [m].$
\end{proof}}

\begin{theorem}\label{21.09.12}
\rmv{Let $\cS=S_1\times \cdots \times S_m$ and $g=g_1\cdots g_m$ as in Definition~\ref{21.09.02}.}The multivariate Goppa code $\Gamma(\cal S, g)$ is the subfield subcode of an augmented Cartesian code. Specifically,
\[\Gamma(\cal S, g)=\acar\left(\mathcal{S},{\displaystyle g}\right)_q.\]
\end{theorem}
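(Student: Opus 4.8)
The plan is to obtain this statement as an immediate consequence of the two preceding theorems, with no new computation required. First I would unwind the characterization in Theorem~\ref{21.10.24}: it says that $\Gamma(\cal S, g)$ is precisely the set of vectors $\bm{c}\in\F_q^n$ annihilated by a generator matrix $\TGRSG$ of the tensor product $\TGRSG(\cS,g)$, i.e.\ $\bm{c}$ is orthogonal to every row of $\TGRSG$. Since those rows span $\TGRSG(\cS,g)$ over $\F_{q^t}$, this is exactly the condition $\bm{c}\in\TGRSG(\cS,g)^\perp$ together with $\bm{c}\in\F_q^n$. Hence
\[
\Gamma(\cal S, g)=\TGRSG(\cS,g)^\perp\cap\F_q^n=\left(\TGRSG(\cS,g)^\perp\right)_q,
\]
the subfield subcode over $\F_q$ of the dual of the tensor product of GRS codes via Goppa codes.

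Next I would invoke Theorem~\ref{21.09.10} in the special case $k_j=\deg(g_j)$ for all $j\in[m]$, which gives the identity $\TGRSG(\cS,g)^\perp=\acar\!\left(\mathcal{S},{\displaystyle g}\right)$ as codes over $\F_{q^t}$. Substituting this equality into the expression above and using that taking subfield subcodes is well defined on equal codes yields
\[
\Gamma(\cal S, g)=\left(\TGRSG(\cS,g)^\perp\right)_q=\acar\!\left(\mathcal{S},{\displaystyle g}\right)_q,
\]
which is the claim.

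I do not expect a genuine obstacle here: the content has all been front-loaded into Theorems~\ref{21.10.24} and~\ref{21.09.10}, and what remains is bookkeeping. The one point to state carefully is that the subfield-subcode operation $(-)_q$ depends only on the code as a subspace of $\F_{q^t}^n$, so that an equality of $\F_{q^t}$-codes passes to an equality of their $\F_q$-subfield subcodes; this is immediate from the definition $C_q=C\cap\F_q^n$. It may also be worth noting in passing that, combined with Corollary~\ref{mvG_dim} and Lemma~\ref{21.09.03}, this identification is what will let us read off the minimum distance of $\Gamma(\cal S, g)$, although that is a downstream remark rather than part of the proof itself.
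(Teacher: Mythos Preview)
Your proposal is correct and follows essentially the same approach as the paper: both arguments combine Theorem~\ref{21.10.24} (identifying $\Gamma(\cS,g)$ as the $\F_q$-subfield subcode of $\TGRSG(\cS,g)^\perp$) with the special case of Theorem~\ref{21.09.10} giving $\TGRSG(\cS,g)^\perp=\acar(\cS,g)$. The paper phrases the first step via a parity check matrix of $\acar(\cS,g)$ rather than directly as $(\TGRSG(\cS,g)^\perp)_q$, but this is only a cosmetic difference.
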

\begin{proof}
By Theorem~\ref{21.09.10}, $\acar\left(\mathcal{S},{\displaystyle g}\right)^\perp = \TGRSG(\cS,g).$
Observe that if $H$ is a parity check matrix of a code $C \subseteq \F_{q^t}^n,$ then $C_q=\left\{ {\bm c\in \F_q^n : H {\bm c}^\perp =0} \right\}.$
 Thus, the result follows from Theorem~\ref{21.10.24}.
\end{proof}

The point of view given in Theorem \ref{21.09.12} reveals additional information about the parameters of the multivariate Goppa codes, complementing Corollary \ref{mvG_dim}.

\begin{corollary}\label{21.10.27}
The multivariate Goppa code $\Gamma(\cal S, g)$ has the following basic parameters.
\begin{itemize}
\item[\rm (i)] Length $n=\mid \mathcal{S} \mid.$
\item[\rm (ii)] Dimension $k$ satisfying $n-t \deg( g) \leq k \leq n-\deg( g).$
\item[\rm (iii)] Minimum distance $d \geq \min \left\{ \deg(g_j) + 1\right\}_{j\in [m]}.$
\end{itemize}
Moreover, the dual is the trace code of a tensor product of generalized Reed-Solomon codes via Goppa codes, specifically,
$$
\Gamma(\cal S, g)^{\perp}=tr( \TGRSG(\cS,g)). \rmv{\text{ this is Delsarte's Theorem}}
$$
\end{corollary}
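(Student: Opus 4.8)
The plan is to read off all four assertions from the identification $\Gamma(\cal S, g)=\acar\left(\mathcal{S},{\displaystyle g}\right)_q$ of Theorem~\ref{21.09.12}, using the parameters of augmented Cartesian codes from Lemma~\ref{21.09.03} together with three standard facts about subfield subcodes: a subfield subcode keeps the length of the ambient code, has dimension no larger, and has minimum distance no smaller. The final ``moreover'' clause is not new --- it is Delsarte's theorem, already recorded in Corollary~\ref{mvG_dim}.

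For (i): $\acar\left(\mathcal{S},{\displaystyle g}\right)$ has length $|\mathcal{S}|=n$ by Lemma~\ref{21.09.03}(i), and passing to a subfield subcode preserves the length. For (iii): since $\Gamma(\cal S, g)=\acar\left(\mathcal{S},{\displaystyle g}\right)_q$ sits, as a set, inside $\acar\left(\mathcal{S},{\displaystyle g}\right)$, every nonzero word of $\Gamma(\cal S, g)$ has weight at least the minimum distance of $\acar\left(\mathcal{S},{\displaystyle g}\right)$, which by Lemma~\ref{21.09.03} equals $\min\{\deg_{x_j}(g)+1 : j\in[m]\}$; since $g=g_1\cdots g_m$ with $g_i\in\F_{q^t}[x_i]$ this is $\min\{\deg(g_j)+1\}_{j\in[m]}$, so $d\geq\min\{\deg(g_j)+1\}_{j\in[m]}$.

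For (ii): the upper bound $k\leq n-\deg(g)$ follows because an $\F_q$-linearly independent subset of $\F_q^n$ remains $\F_{q^t}$-linearly independent in $\F_{q^t}^n$, so an $\F_q$-basis of $\Gamma(\cal S, g)\subseteq\acar\left(\mathcal{S},{\displaystyle g}\right)$ shows $\dim_{\F_q}\Gamma(\cal S, g)\leq\dim_{\F_{q^t}}\acar\left(\mathcal{S},{\displaystyle g}\right)=n-\deg(g)$ by Lemma~\ref{21.09.03}(ii). For the lower bound $k\geq n-t\deg(g)$ and the identity $\Gamma(\cal S, g)^{\perp}=\tr(\TGRSG(\cS,g))$ I would reproduce the proof of Corollary~\ref{mvG_dim}: by Theorem~\ref{21.10.24}, $\Gamma(\cal S, g)=(\TGRSG(\cS,g)^{\perp})_q$, so Delsarte's theorem gives $\Gamma(\cal S, g)^{\perp}=\tr((\TGRSG(\cS,g)^{\perp})^{\perp})=\tr(\TGRSG(\cS,g))$; since $\TGRSG(\cS,g)$ has dimension $\deg(g)$ (Remark~\ref{21.10.30}) and trace codes satisfy $\dim\tr(C)\leq t\dim C$, we get $\dim\Gamma(\cal S, g)^{\perp}\leq t\deg(g)$, whence $k=n-\dim\Gamma(\cal S, g)^{\perp}\geq n-t\deg(g)$.

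I do not expect a genuine obstacle --- the argument is just assembling Theorem~\ref{21.09.12}, Lemma~\ref{21.09.03}, and Corollary~\ref{mvG_dim}. The only point to watch is the boundary case where $\deg_{x_j}(g)=n_j$ for some $j$: then that coordinate drops out of the generating set of $\acar\left(\mathcal{S},{\displaystyle g}\right)$, but a minimum over the full index set $[m]$ is at most a minimum over any subset, so the bound in (iii) still holds; and if $\deg_{x_j}(g)=n_j$ for every $j$, then $\acar\left(\mathcal{S},{\displaystyle g}\right)$, hence $\Gamma(\cal S, g)$, is the zero code and all three parameter statements are trivial.
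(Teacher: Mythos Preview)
Your proposal is correct and follows essentially the same approach as the paper: items (i), (ii), and the ``moreover'' clause are exactly Corollary~\ref{mvG_dim} (which you re-derive in full detail rather than merely cite), and item (iii) is obtained precisely as the paper does, via Theorem~\ref{21.09.12} and Lemma~\ref{21.09.03}(iii). The only cosmetic difference is that for the upper bound in (ii) you use the $\acar$ identification directly, whereas the paper phrases it through the trace-code inequality $\dim\tr(C)\geq\dim C$ applied to $\Gamma(\cal S,g)^\perp$; since $\TGRSG(\cS,g)^\perp=\acar(\cS,g)$ by Theorem~\ref{21.09.10}, these are the same bound.
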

\begin{proof}
Given Corollary \ref{mvG_dim}, it only remains to consider the minimum distance of  $\Gamma(\cal S, g)$.
By Theorem~\ref{21.09.12} and Lemma~\ref{21.09.03} (iii), $d\geq \min \left\{\deg(g_j) + 1\right\}_{j\in [m]}.$
\end{proof}
\section{Subcodes, intersections, and hulls} \label{hulls_section}
In this section, we build on the relationships between multivariate Goppa codes, tensor products of GRS codes, and augmented Cartesian codes to determine subcodes, intersections, and hulls. To do so, we provide conditions on the defining sets of polynomials to yield the desired structures. These results will be key to the construction of entanglement-assisted quantum error correcting codes, LCD, self-orthogonal, or self-dual codes in the next section.

First, the following result helps to identify subcodes of Goppa codes, augmented Cartesian codes and tensor product of GRS codes via Goppa codes.
\begin{proposition}\label{21.09.14}
Let $g=g_1\cdots g_m,$ $g^\prime=g_1^\prime \cdots g_m^\prime \in \F_{q^t}[{\bm x}]$ be such that $g(\cS)\neq 0 \neq g^\prime(\cS).$ Then the following hold:
\begin{itemize}
\item[\rm (i)] $\TGRSG(\cS,g) \subseteq \TGRSG(\cS,gg^\prime).$
\item[\rm (ii)] $\Gamma(\cal S, gg^\prime) \subseteq \Gamma(\cal S, g).$
\item[\rm (iii)] $\acar\left(\mathcal{S},{gg^\prime}\right) \subseteq \acar\left(\mathcal{S},{g}\right).$
\end{itemize}
\end{proposition}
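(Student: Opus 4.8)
The plan is to prove all three containments by reducing to the polynomial-degree structure that underlies each family, using the results already established. The most natural route is to prove one of the three directly and then derive the other two by duality, since Theorems \ref{21.10.24}, \ref{21.09.10}, and \ref{21.09.12} tie these families together tightly. I would start with (i), the containment $\TGRSG(\cS,g) \subseteq \TGRSG(\cS,gg')$. Unwinding Definition \ref{tensor_product_codes_def}, $\TGRSG(\cS,g) = \bigotimes_{j=1}^m \GRS(S_j,\deg(g_j),g_j)$, so it suffices to show that for each $j$ one has $\GRS(S_j,\deg(g_j),g_j) \subseteq \GRS(S_j,\deg(g_jg_j'),g_jg_j')$, since the Kronecker product of generator matrices is monotone under row-span inclusion of each factor. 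For the one-variable statement: a codeword of $\GRS(S_j,\deg(g_j),g_j)$ has the form $\bigl(g_j(s)^{-1}f(s)\bigr)_{s\in S_j}$ with $\deg f < \deg g_j$; writing this as $\bigl((g_jg_j')(s)^{-1}\,(g_j'f)(s)\bigr)_{s\in S_j}$ and noting $\deg(g_j'f) < \deg g_j' + \deg g_j = \deg(g_jg_j')$ exhibits it as a codeword of $\GRS(S_j,\deg(g_jg_j'),g_jg_j')$. That settles (i).

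Next I would obtain (iii) from (i) via Theorem \ref{21.09.10}. That theorem gives $\TGRSG(\cS,g)^\perp = \acar(\mathcal S, g)$ and $\TGRSG(\cS,gg')^\perp = \acar(\mathcal S, gg')$. Since taking duals reverses inclusions, $\TGRSG(\cS,g) \subseteq \TGRSG(\cS,gg')$ immediately yields $\acar(\mathcal S, gg') \subseteq \acar(\mathcal S, g)$, which is (iii). (One should check the hypothesis $g(\cS) \ne 0 \ne g'(\cS)$ guarantees $(gg')(\cS) \ne 0$, so that all four codes are legitimately defined; this is immediate.)

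Finally, (ii) follows from (iii) by taking subfield subcodes, or alternatively from Theorem \ref{21.09.12}, which identifies $\Gamma(\cS, g) = \acar(\mathcal S, g)_q$ and $\Gamma(\cS, gg') = \acar(\mathcal S, gg')_q$. Since $\acar(\mathcal S, gg') \subseteq \acar(\mathcal S, g)$ by (iii), intersecting both sides with $\F_q^n$ gives $\Gamma(\cS, gg') \subseteq \Gamma(\cS, g)$, which is (ii). I do not anticipate a genuine obstacle here; the only point requiring minor care is the one-variable GRS inclusion at the start — specifically the degree bookkeeping $\deg(g_j'f) < \deg(g_jg_j')$ and the observation that the Kronecker-product construction preserves inclusions factor-by-factor — but both are routine. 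An equally valid alternative would prove (ii) directly from the definition of $\Gamma(\cS, g)$ by observing that $\sum_i c_i/\prod_j(x_j - s_{ij}) \equiv 0 \bmod gg'$ implies the same congruence $\bmod\ g$, and then deduce (i) and (iii) by duality; I would present whichever ordering reads most cleanly, but the dual-chain approach above is the shortest.
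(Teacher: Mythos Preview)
Your proposal is correct and follows essentially the same approach as the paper: prove (i) by the ``multiply numerator and denominator by $g'$'' trick and degree bookkeeping, then obtain (iii) from (i) by duality via Theorem~\ref{21.09.10}. The only minor differences are that the paper carries out (i) directly at the multivariate evaluation-code level rather than factor-by-factor with Kronecker monotonicity, and derives (ii) from (i) via the trace description $\Gamma(\cS,g)^\perp = \tr(\TGRSG(\cS,g))$ (Corollary~\ref{mvG_dim}) instead of via subfield subcodes of $\acar$ as you do; both routes are equivalent by Delsarte.
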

\begin{proof}
(i) By Equation~(\ref{21.10.26}) and the definition of a GRS code,
\begin{eqnarray*}
\TGRSG(\cS,g) &=& \left\{ \left(\frac{\displaystyle f (\bm{s}_1)}{\displaystyle g (\bm{s}_1)},\ldots,\frac{\displaystyle f(\bm{s}_n)}{\displaystyle g(\bm{s}_n)}\right) : \deg_{x_j}(f) < \deg(g_j) \right\} \\
&=& \left\{ \left(\frac{\displaystyle (fg^\prime) (\bm{s}_1)}{\displaystyle (gg^\prime) (\bm{s}_1)},\ldots,\frac{\displaystyle (fg^\prime) (\bm{s}_n)}{\displaystyle (gg^\prime) (\bm{s}_n)}\right) : \deg_{x_j}(f) < \deg(g_j) \right\} \\
&\subseteq& \left\{ \left(\frac{\displaystyle f^\prime (\bm{s}_1)}{\displaystyle (gg^\prime) (\bm{s}_1)},\ldots,\frac{\displaystyle f^\prime (\bm{s}_n)}{\displaystyle (gg^\prime) (\bm{s}_n)}\right) : \deg_{x_j}(f^\prime) < \deg(g_jg_j^\prime) \right\}\\
&=& \TGRSG(\cS,gg^\prime).
\end{eqnarray*}

(ii) By (i), $\tr\left(\TGRSG(\cS,g)\right) \subseteq \tr\left(\TGRSG(\cS,gg^\prime)\right).$ Thus, the result follows from Theorem \ref{21.10.24}.

(iii) This is a consequence of (i) and Theorem~\ref{21.09.10}.
\end{proof}

Next, we see that the intersection of multivariate Goppa codes is again a multivariate Goppa code. In addition to generalizing \cite[Theorem 3.1]{GYHZ} to multiple variables, the next result demonstrates that in order for the intersection of GRS  code via a Goppa code to be of the same type, we only require that the sum of the degrees of the defining polynomials is bounded above rather than that the two polynomials are related to one another as specified in \cite[Theorem 3.1]{GYHZ}.

\begin{theorem}\label{21.09.17}
Let $g=g_1\cdots g_m,$ $g^\prime=g_1^\prime \cdots g_m^\prime \in \F_{q^t}[{\bm x}]$ be such that $g(\cS)\neq 0 \neq g^\prime(\cS)$ and $\deg(g_j g_j^\prime) \leq n_j,$ for $j\in [m].$ Then the following hold:
\begin{itemize}
\item[\rm (i)] $\TGRSG(\cS,g) \cap \TGRSG(\cS,g^\prime) = \TGRSG(\cS,\gcd(g,g^\prime)).$
\item[\rm (ii)] $\Gamma(\cal S, g) \cap \Gamma(\cal S, g^\prime)=\Gamma(\cal S, \lcm(g,g^\prime)).$
\item[\rm (iii)] $\acar\left(\mathcal{S},{\displaystyle \lcm(g,g^\prime)}\right) \subseteq
\acar\left(\mathcal{S},{\displaystyle g}\right) \cap \acar\left(\mathcal{S},{\displaystyle g^\prime}\right)\newline \subseteq
\acar\left(\mathcal{S},{\displaystyle g}\right) + \acar\left(\mathcal{S},{\displaystyle g^\prime}\right) =
\acar\left(\mathcal{S},{\displaystyle \gcd(g,g^\prime)}\right).$
\end{itemize}
\end{theorem}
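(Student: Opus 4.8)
The plan is to prove part (i) directly using the polynomial description of $\TGRSG(\cS,g)$ inherited from the monomial-Cartesian framework, and then to derive parts (ii) and (iii) formally by dualizing, exactly as Proposition~\ref{21.09.14} derived (ii) and (iii) from (i). For (i), I would first observe that, by the computation in the proof of Proposition~\ref{21.09.14}(i), an element of $\TGRSG(\cS,g)$ is a vector of the form $\ev(\cS, g)(f)$ with $\deg_{x_j}(f) < \deg(g_j)$; equivalently, using the unique representative convention for $f/g$ introduced before Definition~\ref{21.10.21}, it is the evaluation of a polynomial of the shape $f/g$ with $\deg_{x_j}(f)<\deg(g_j)$. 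The containment $\TGRSG(\cS,\gcd(g,g^\prime)) \subseteq \TGRSG(\cS,g) \cap \TGRSG(\cS,g^\prime)$ is then immediate from Proposition~\ref{21.09.14}(i), since $\gcd(g,g^\prime)$ divides both $g$ and $g^\prime$ (note $\gcd$ and $\lcm$ are taken coordinatewise in the $g_j$, so this divisibility holds variable-by-variable). For the reverse containment, suppose a vector lies in both codes, so it equals $\ev(\cS,1)(f/g) = \ev(\cS,1)(f'/g^\prime)$ for suitable numerators with $\deg_{x_j}(f)<\deg(g_j)$ and $\deg_{x_j}(f')<\deg(g_j^\prime)$. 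Because $\ev(\cS,1)$ is injective on polynomials with $\deg_{x_j}<n_j$ and the degree bound $\deg(g_jg_j^\prime)\le n_j$ guarantees $f g^\prime$ and $f' g$ stay in that range, we get the polynomial identity $f g^\prime = f' g$ in $\F_{q^t}[\bm x]$ (not merely modulo $I(\cS)$). Writing $g = \gcd(g,g^\prime)\, \tilde g$ and $g^\prime = \gcd(g,g^\prime)\, \tilde g^\prime$ with $\gcd(\tilde g_j,\tilde g_j^\prime)=1$ coordinatewise, unique factorization in $\F_{q^t}[x_j]$ forces $\tilde g_j \mid f'_{\text{(in }x_j)}$ appropriately; assembling over $j$ one concludes $\tilde g \mid f$, so $f/g = (f/\tilde g)/\gcd(g,g^\prime)$ with $\deg_{x_j}(f/\tilde g) < \deg(g_j) - \deg(\tilde g_j) = \deg(\gcd(g,g^\prime)_j)$, which exhibits the vector as an element of $\TGRSG(\cS,\gcd(g,g^\prime))$.

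For part (iii), I would apply Theorem~\ref{21.09.10}, which identifies $\acar(\cS, h)$ as the dual of $\TGRSG(\cS, h)$. Dualizing (i) and using the standard identities $(C\cap D)^\perp = C^\perp + D^\perp$ and $(C+D)^\perp = C^\perp\cap D^\perp$, the equality $\TGRSG(\cS,g)\cap\TGRSG(\cS,g^\prime) = \TGRSG(\cS,\gcd(g,g^\prime))$ becomes $\acar(\cS,g) + \acar(\cS,g^\prime) = \acar(\cS,\gcd(g,g^\prime))$, which is the asserted final equality. The leftmost containment $\acar(\cS,\lcm(g,g^\prime)) \subseteq \acar(\cS,g)\cap\acar(\cS,g^\prime)$ follows from Proposition~\ref{21.09.14}(iii) since $g$ and $g^\prime$ each divide $\lcm(g,g^\prime)$ (again coordinatewise), and the middle containment $C\cap D \subseteq C+D$ is trivial.

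For part (ii), I would take subfield subcodes / trace codes. By Theorem~\ref{21.10.24} and Corollary~\ref{mvG_dim}, $\Gamma(\cS,h) = (\TGRSG(\cS,h)^\perp)_q$, and by Delsarte's theorem the subfield-subcode operation and the trace operation interact with intersections and sums in the expected way: $(C\cap D)_q = C_q \cap D_q$ for subfield subcodes. Since $\TGRSG(\cS,g)^\perp \cap \TGRSG(\cS,g^\prime)^\perp = (\TGRSG(\cS,g)+\TGRSG(\cS,g^\prime))^\perp$ and, by (i) dualized as above, $\TGRSG(\cS,g)+\TGRSG(\cS,g^\prime) = \acar(\cS,\gcd(g,g^\prime))^\perp$... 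I should instead go the cleaner route: intersect the dual codes of the tensor products. From (i), $\bigl(\TGRSG(\cS,g)\cap\TGRSG(\cS,g^\prime)\bigr)^\perp = \TGRSG(\cS,\gcd(g,g^\prime))^\perp$, but what I actually need is that the dual of the \emph{sum} equals the intersection; so from (i) I get $\TGRSG(\cS,g)^\perp + \TGRSG(\cS,g^\prime)^\perp = (\TGRSG(\cS,g)\cap\TGRSG(\cS,g^\prime))^\perp = \TGRSG(\cS,\gcd(g,g^\prime))^\perp$, and dually $\TGRSG(\cS,g)^\perp \cap \TGRSG(\cS,g^\prime)^\perp = (\TGRSG(\cS,g)+\TGRSG(\cS,g^\prime))^\perp$. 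To pin down $\TGRSG(\cS,g)+\TGRSG(\cS,g^\prime)$, I would run the polynomial argument of (i) in reverse: a sum of two such evaluation vectors is $\ev(\cS,1)\bigl((f g^\prime + f' g)/(g g^\prime)\bigr)$, and reducing the numerator modulo the coordinatewise common factors shows this equals $\TGRSG(\cS,\lcm(g,g^\prime))$ (using $g g^\prime = \gcd(g,g^\prime)\,\lcm(g,g^\prime)$). Then taking subfield subcodes over $\F_q$ of $\TGRSG(\cS,g)^\perp \cap \TGRSG(\cS,g^\prime)^\perp = \TGRSG(\cS,\lcm(g,g^\prime))^\perp$ and invoking $(C\cap D)_q = C_q\cap D_q$ gives exactly $\Gamma(\cS,g)\cap\Gamma(\cS,g^\prime) = \Gamma(\cS,\lcm(g,g^\prime))$.

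The main obstacle is the divisibility bookkeeping in part (i): one must be careful that $\gcd$ and $\lcm$ of $g=g_1\cdots g_m$ and $g^\prime = g_1^\prime\cdots g_m^\prime$ are interpreted coordinatewise (so that the resulting polynomial still has the product form required by Definition~\ref{tensor_product_codes_def}), and that the degree hypothesis $\deg(g_jg_j^\prime)\le n_j$ is exactly what is needed to promote the congruence $fg^\prime \equiv f' g \pmod{I(\cS)}$ to an honest polynomial identity, which is what lets unique factorization in each $\F_{q^t}[x_j]$ do the work. Everything else is a formal consequence of Theorems~\ref{21.10.24} and~\ref{21.09.10} together with the standard duality and subfield-subcode identities.
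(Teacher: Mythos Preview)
Your arguments for parts (i) and (iii) are essentially those of the paper: the containment $\TGRSG(\cS,\gcd(g,g'))\subseteq\TGRSG(\cS,g)\cap\TGRSG(\cS,g')$ via Proposition~\ref{21.09.14}(i), the reverse containment via the observation that the degree hypothesis $\deg(g_jg_j')\le n_j$ promotes $fg'\equiv f'g\pmod{I(\cS)}$ to an honest identity so that unique factorization in each $\F_{q^t}[x_j]$ applies, and the derivation of (iii) by dualizing through Theorem~\ref{21.09.10}, all match the paper.

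Part (ii), however, has a genuine gap. Your route rests on the identity $\TGRSG(\cS,g)+\TGRSG(\cS,g')=\TGRSG(\cS,\lcm(g,g'))$, but this is false for $m\ge 2$. A dimension count exposes it: by (i) and inclusion--exclusion,
\[
\dim\bigl(\TGRSG(\cS,g)+\TGRSG(\cS,g')\bigr)=\prod_j\deg(g_j)+\prod_j\deg(g_j')-\prod_j\deg(\gcd_j),
\]
whereas $\dim\TGRSG(\cS,\lcm(g,g'))=\prod_j\bigl(\deg(g_j)+\deg(g_j')-\deg(\gcd_j)\bigr)$, and these differ already when $m=2$ and $\gcd(g,g')=1$. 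Equivalently, the dual statement you need, $\acar(\cS,g)\cap\acar(\cS,g')=\acar(\cS,\lcm(g,g'))$, is exactly what part (iii) does \emph{not} assert: only the inclusion holds, not the equality. So taking subfield subcodes of $\TGRSG(\cS,g)^\perp\cap\TGRSG(\cS,g')^\perp$ does not hand you $\Gamma(\cS,\lcm(g,g'))$.

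The paper avoids this by proving (ii) directly from the defining congruence of $\Gamma(\cS,\cdot)$ rather than via dualization: if $\sum_i c_i/\prod_j(x_j-s_{ij})\equiv 0\pmod{g}$ and $\equiv 0\pmod{g'}$, then the same sum is $\equiv 0\pmod{\lcm(g,g')}$, giving $\Gamma(\cS,g)\cap\Gamma(\cS,g')\subseteq\Gamma(\cS,\lcm(g,g'))$; the reverse inclusion is Proposition~\ref{21.09.14}(ii). This elementary argument is not recoverable from (i) by duality alone, precisely because the tensor/augmented families are not closed under sum/intersection in the way your plan assumes.
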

\begin{proof}
For $j\in [m],$ define $\gcd_j := \gcd(g_j,g^\prime_j)\in \F_{q^t}[x_j]$ and $\lcm_j := \lcm(g_j,g^\prime_j)\in \F_{q^t}[x_j].$ Observe that $\gcd := \gcd(g,g^\prime)= \gcd_1\cdots \gcd_m$ and $\lcm := \lcm(g,g^\prime)= \lcm_1\cdots \lcm_m.$ There are $p,p^\prime, t, t^\prime \in \F_{q^t}[{\bm x}]$ such that $g=p\gcd, g^\prime=p^\prime \gcd, \lcm(g,g^\prime)=gt$ and $\lcm(g,g^\prime)=g^\prime t^\prime.$

{\rm (i)} By Proposition~\ref{21.09.14} (i),
\[\TGRSG(\cS,\gcd) \subseteq \TGRSG(\cS,p\gcd) = \TGRSG(\cS,g)\] and
\[\TGRSG(\cS,\gcd) \subseteq \TGRSG(\cS,p^\prime\gcd)= \TGRSG(\cS,g^\prime).\]
Thus $\TGRSG(\cS,\gcd) \subseteq \TGRSG(\cS,g) \cap \TGRSG(\cS,g^\prime).$
\rmv{ Take $\bm{c}\in \TGRSG(\cS,\gcd).$ There is $f\in \F_{q^t}[{\bm x}]$ with $\deg_{x_j}(f)< \deg_{x_j}(\gcd)$ such that
\begin{eqnarray*}
\bm{c} &=& \left(\frac{f(\bm{s}_1)}{\gcd(\bm{s}_1)},\ldots, \frac{f(\bm{s}_n)}{\gcd(\bm{s}_n)} \right)=
\left(\frac{(pf)(\bm{s}_1)}{(p\gcd)(\bm{s}_1)},\ldots, \frac{(pf)(\bm{s}_n)}{(p\gcd)(\bm{s}_n)} \right)\\
&=& \left(\frac{(pf)(\bm{s}_1)}{g(\bm{s}_1)},\ldots, \frac{(pf)(\bm{s}_n)}{g(\bm{s}_n)} \right) \in
\TGRSG(\cS,g),
\end{eqnarray*}
as $\deg_{x_j}(pf) = \deg_{x_j}(p) + \deg_{x_j}(f) <  \deg_{x_j}(p) + \deg_{x_j}(\gcd) = \deg_{x_j}(g);$
\begin{eqnarray*}
\text{and }\quad \bm{c} &=& \left(\frac{f(\bm{s}_1)}{\gcd(\bm{s}_1)},\ldots, \frac{f(\bm{s}_n)}{\gcd(\bm{s}_n)} \right)=
\left(\frac{(p^\prime f)(\bm{s}_1)}{(p^\prime \gcd)(\bm{s}_1)},\ldots, \frac{(pf)(\bm{s}_n)}{(p^\prime \gcd)(\bm{s}_n)} \right)\\
&=& \left(\frac{(p^\prime f)(\bm{s}_1)}{g^\prime(\bm{s}_1)},\ldots, \frac{(p ^\prime f)(\bm{s}_n)}{g^\prime(\bm{s}_n)} \right) \in
\TGRSG(\cS,g^\prime),
\end{eqnarray*}
as $\deg_{x_j}(p^\prime f) = \deg_{x_j}(p^\prime) + \deg_{x_j}(f) <  \deg_{x_j}(p^\prime) + \deg_{x_j}(\gcd) = \deg_{x_j}(g^\prime).$ Thus we obtain that $\bm{c}\in \TGRSG(\cS,g) \cap \TGRSG(\cS,g^\prime).$}
Now take $\bm{c}\in \TGRSG(\cS,g) \cap \TGRSG(\cS,g^\prime).$ There are $f,f^\prime \in \F_{q^t}[\bm{x}]$ such that for $j\in[m], \deg_{x_j}(f) < \deg_{x_j} (g),\deg_{x_j}(f^\prime) < \deg_{x_j} (g^\prime),$ and 
\begin{equation}\label{21.10.29}
\bm{c} = \left(\frac{f(\bm{s}_1)}{g(\bm{s}_1)},\ldots, \frac{f(\bm{s}_n)}{g(\bm{s}_n)} \right)=
\left(\frac{f^\prime(\bm{s}_1)}{g^\prime(\bm{s}_1)},\ldots, \frac{f^\prime(\bm{s}_n)}{g^\prime(\bm{s}_n)} \right).
\end{equation}
Observe that $g^\prime f-gf^\prime \in I(\cS).$ As
$\deg_{x_j}(g^\prime f-gf^\prime)\leq \max \left\{\deg_{x_j}(g^\prime f),\deg_{x_j}(gf^\prime) \right\}< \deg_{x_j}(gg^\prime)\leq n_j,$ then $g^\prime f = gf^\prime.$ This implies that $\frac{g^\prime}{\gcd}f=\frac{g}{\gcd}f^\prime.$ As $\frac{g^\prime}{\gcd}$ and $\frac{g}{\gcd}$ share no common factors, $\frac{g}{\gcd}$ divides $f.$ There is $r \in \F_{q^t}[{\bm x}]$
such that $f=r\frac{g}{\gcd}.$ Thus, $g^\prime f=r\frac{gg^\prime}{\gcd}=r\lcm,$ due $\lcm\gcd=gg^\prime.$ As $\lcm$ divides $g^\prime f,$ then
\begin{eqnarray*}
\deg_{x_j}\left(\frac{g^\prime f}{\lcm}\right) &=& \deg_{x_j}(g^\prime) + \deg_{x_j}(f) - \deg_{x_j}(\lcm)\\
&=& \deg_{x_j}(\gcd) + \deg_{x_j}(\lcm) - \deg_{x_j}(g) + \deg_{x_j}(f) - \deg_{x_j}(\lcm)\\
&=& \deg_{x_j}(\gcd) - \deg_{x_j}(g) + \deg_{x_j}(f)\\
&<& \deg_{x_j}(\gcd),
\end{eqnarray*}
where the inequality holds because $\deg_{x_j}(f)<\deg_{x_j}(g).$ Equations~\ref{21.10.29} imply
\[\bm{c} = \left(\frac{(g^\prime f)(\bm{s}_1)}{(\lcm\gcd)(\bm{s}_1)},\ldots, \frac{(g^\prime f)(\bm{s}_n)}{(\lcm\gcd)(\bm{s}_n)} \right)
=\left(\frac{\left(\frac{g^\prime f}{\lcm}\right)(\bm{s}_1)}{\gcd(\bm{s}_1)},\ldots, \frac{\left(\frac{g^\prime f}{\lcm}\right)(\bm{s}_n)}{\gcd(\bm{s}_n)} \right).\]
As \rmv{$\frac{\displaystyle g^\prime f}{\displaystyle \lcm}$ is an element in $\F_{q^t}[{\bm x}]$ such that} $\deg_{x_j}\left(\frac{g^\prime f}{\lcm}\right)<\deg_{x_j}(\gcd),$ we obtain that $\bm{c}\in \TGRSG(\cS,\gcd).$ 

\rm{(ii)} By Proposition~\ref{21.09.14}~(ii),
\begin{eqnarray*}
\Gamma(\cal S, \lcm) = \Gamma(\cal S, tg) \subseteq \Gamma(\cal S, g) \end{eqnarray*}
and 
\begin{eqnarray*}\Gamma(\cal S, \lcm) = \Gamma(\cal S, t^\prime g^\prime) \subseteq \Gamma(\cal S, g^\prime).
\end{eqnarray*}
We conclude that $\Gamma(\cal S, \lcm) \subseteq \Gamma(\cal S, g) \cap \Gamma(\cal S, g^\prime).$ If $\bm{c} \in \Gamma(\cal S, g)\cap \Gamma(\cS,g^\prime),$ then
\begin{eqnarray*}
\sum_{i=1}^n \frac{c_i}{\prod_{j=1}^m(x_j-s_{ij})} = 0 \mod g({\bm x})  \end{eqnarray*}  and
\begin{eqnarray*} \sum_{i=1}^n \frac{c_i}{\prod_{j=1}^m(x_j-s_{ij})} = 0 \mod g^\prime({\bm x}).
\end{eqnarray*}
\rmv{then $\sum_{i=1}^n \frac{c_i}{\prod_{j=1}^m(x_j-s_{ij})} = 0 \mod g({\bm x})$ and $\sum_{i=1}^n \frac{c_i}{\prod_{j=1}^m(x_j-s_{ij})} = 0 \mod g^\prime({\bm x}).$} Thus, $\displaystyle \sum_{i=1}^n \frac{c_i}{\prod_{j=1}^m(x_j-s_{ij})} = 0 \mod \lcm(g,g^\prime)({\bm x}),$ which means that $\bm{c} \in \Gamma(\cal S, \lcm(g,g^\prime)).$

{\rm (iii)} By Proposition~\ref{21.09.14}~(iii),
\begin{eqnarray*}
\acar\left(\mathcal{S},{\displaystyle \lcm}\right) = \acar(\cal S, tg) &\subseteq& \acar(\cal S, g)\end{eqnarray*} { and } \begin{eqnarray*}
\acar(\cal S, \lcm) = \acar(\cal S, t^\prime g^\prime) &\subseteq& \acar(\cal S, g^\prime).
\end{eqnarray*}
This means that $\acar(\cal S, \lcm) \subseteq \acar(\cal S, g) \cap \acar(\cal S, g^\prime).$ By (i) and \cite[Ch. 1. \textsection 8.]{MacWilliams-Sloane}, $\TGRSG(\cS,g)^\perp + \TGRSG(\cS,g^\prime)^\perp = \TGRSG(\cS,\gcd)^\perp.$ Thus, by Theorem~\ref{21.09.10} we obtain 
$$\displaystyle \acar\left(\mathcal{S},{\displaystyle g}\right) \cap \acar\left(\mathcal{S},{\displaystyle g^\prime}\right) \subseteq
\acar\left(\mathcal{S},{\displaystyle g}\right) + \acar\left(\mathcal{S},{\displaystyle g^\prime}\right) =
\acar\left(\mathcal{S},{\displaystyle \gcd(g,g^\prime)}\right).$$
\end{proof}
\rmv{Let $h({\bm x})$ an element in $\F_{q^t}[{\bm x}]$ such that $h({\bm s}_i)\neq 0$ for $i\in[n].$ Define $D_h$ as the diagonal matrix $Diag\left\{h({\bm s}_1)^{-1},\ldots,h({\bm s}_n)^{-1}) \right\}.$ The following result shows that certain tensor products $\TGRSG(\cS,f)$ have the property that for some $h({\bm x}),$ $\TGRSG(\cS,f)=D_h \TGRSG(\cS,f).$} 

Let $\left( p_i {s}_i^{a} \right)_{a,i} \in \F_{q^t}^{k \times n}$ and $\left( t_i {s}_i^{a} \right)_{a,i} \in \F_{q^t}^{k \times n}$ be two generator matrices of the same GRS code, where the rows and columns are indexed by $ 0\leq a < k$ and $i \in [n],$ respectively. In \cite[Lemma 2.5]{GYHZ}, the authors describe a property that these matrices should satisfy. Specifically, if $k \leq n/2,$ then for $i\in[n],$ $p_i=\lambda t_i,$ where $\lambda \in \F_{q^t}^*.$ When $k=n,$ it is clear that the relation $p_i=\lambda t_i$ is not valid anymore, as in this case, for any coefficients $p_i, t_i,$ both matrices $\left( p_i {s}_i^{a} \right)_{a,i} \in \F_{q^t}^{k \times n}$ and $\left( t_i {s}_i^{a} \right)_{a,i} \in \F_{q^t}^{k \times n}$ generate the full  space $\F_{q^t}^n.$ The following result extends \cite[Lemma 2.5]{GYHZ} to more variables and changes the restriction from $k\leq n/2$ to $k<n.$ This result will be helpful in characterizing when the dual of a tensor product of GRS codes via Goppa codes is of the same form.

\begin{lemma}\label{21.10.28}
Let $f,$ $F \in \F_{q^t}[{\bm x}]$ be such that $f(\cS)\neq 0 \neq F(\cS).$ Define ${\bm k}:= (n_1,\ldots,n_{j^* -1}, k, n_{j^* +1},\ldots, n_m),$ where $k<n.$ Then, \rmv{$\deg(f_{j^*})<n_{j^*},$ for some  $j^*\in[m],$ and $\deg(f_{j})=n_j,$ for all $j\in [m]\setminus \{j^*\}.$ Then,}$\deg_{x_{j^*}}\left(\frac{F}{f}\right)=0$ if and only if \[\TGRSG(\cS,{\bm k},f)=\TGRSG(\cS,{\bm k},F).\]
 \rmv{If in addition $\deg(f_{j^*}) \leq \frac{n_{j^*}}{2},$ then the converse is also true.}
\end{lemma}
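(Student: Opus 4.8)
The plan is to reduce to the single-variable statement and then handle that base case directly by comparing generator matrices columnwise. First I would observe that since $\mathbf{k}=(n_1,\dots,n_{j^*-1},k,n_{j^*+1},\dots,n_m)$, the tensor factors in positions $j\ne j^*$ are full spaces $\F_{q^t}^{n_j}$ by Remark~\ref{21.09.15}, more precisely $\GRS(S_j,n_j,f_j)=\F_{q^t}^{n_j}$ regardless of $f_j$. Hence both $\TGRSG(\cS,\mathbf{k},f)$ and $\TGRSG(\cS,\mathbf{k},F)$ are, up to the Kronecker ordering of coordinates, the tensor product of $\GRS(S_{j^*},k,f_{j^*})$ (resp.\ $\GRS(S_{j^*},k,F_{j^*})$) with a fixed product of full spaces. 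Writing $\frac{F}{f}$ as the reduced polynomial of Definition's convention (degree $<n_j$ in each $x_j$) and noting $\deg_{x_{j^*}}(F/f)=0$ means this polynomial does not involve $x_{j^*}$, I would argue that $\deg_{x_{j^*}}(F/f)=0$ is equivalent to $\deg_{x_{j^*}}(F_{j^*}/f_{j^*})=0$ together with the quotient being a unit in the appropriate sense on the other coordinates; but since the codes $\GRS(S_j,n_j,\cdot)$ are insensitive to the denominator, only the $j^*$ factor matters, and the claim collapses to: $\deg(F_{j^*}/f_{j^*})=0$ on $S_{j^*}$ iff $\GRS(S_{j^*},k,f_{j^*})=\GRS(S_{j^*},k,F_{j^*})$. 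This is the one-variable case with $k<n_{j^*}$.

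For that one-variable case, suppose first $\deg_{x_{j^*}}(F_{j^*}/f_{j^*})=0$, i.e.\ on $S_{j^*}$ the ratio $F_{j^*}/f_{j^*}$ agrees with a nonzero constant $\lambda\in\F_{q^t}^*$. Then for every $p\in\F_{q^t}[x]_{<k}$ the vector $(f_{j^*}(s)^{-1}p(s))_s$ equals $(F_{j^*}(s)^{-1}(\lambda p)(s))_s$, and $\lambda p$ again ranges over $\F_{q^t}[x]_{<k}$; this gives containment both ways, hence equality of the two GRS codes, and then equality of the tensor products. Conversely, assume $\GRS(S_{j^*},k,f_{j^*})=\GRS(S_{j^*},k,F_{j^*})$ with $k<n_{j^*}$. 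Then the two generator matrices $(f_{j^*}(s_i)^{-1}s_i^{a})_{a,i}$ and $(F_{j^*}(s_i)^{-1}s_i^{a})_{a,i}$ span the same code, so there is an invertible $M\in\GL_k(\F_{q^t})$ relating them. Because $k<n_{j^*}$, the code is MDS of dimension $k<n$, so it contains a unique (up to scalar) codeword supported on each size-$k$ complement pattern; the standard argument — comparing the two matrices column by column using the fact that a nonzero degree-$<k$ polynomial cannot vanish at $k$ points — forces $f_{j^*}(s_i)^{-1}=\lambda F_{j^*}(s_i)^{-1}$ for a single $\lambda$ independent of $i$. (This is exactly the content of \cite[Lemma 2.5]{GYHZ}, whose hypothesis $k\le n/2$ I would replace by $k<n$: one picks two evaluation points, considers the codeword of degree $<k$ vanishing at a common set of $k-1$ other points, and reads off the ratio; $k<n$ is precisely what guarantees enough remaining points to run this.) Having $f_{j^*}(s_i)=\lambda^{-1}F_{j^*}(s_i)$ for all $i$ says $F_{j^*}/f_{j^*}$ is constant on $S_{j^*}$, i.e.\ $\deg_{x_{j^*}}(F/f)=0$.

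I expect the main obstacle to be the strengthening of the evaluation-matrix rigidity lemma from $k\le n/2$ to $k<n$: one must verify that the column-by-column comparison still produces a \emph{single} scalar $\lambda$ valid for all $n$ columns when $k$ is as large as $n-1$, rather than only a local proportionality. The key point is that for any two indices $i\ne i'$ one can find a subset $T\subseteq[n_{j^*}]\setminus\{i,i'\}$ of size $k-1$ (possible exactly because $n_{j^*}-2\ge k-1\iff k\le n_{j^*}-1$), take the nonzero polynomial $p$ of degree $<k$ vanishing on $\{s_t:t\in T\}$, note $p(s_i),p(s_{i'})\ne 0$ since $p$ has $<k$ roots, and equate the two expressions for the corresponding codeword to get $f_{j^*}(s_i)^{-1}/f_{j^*}(s_{i'})^{-1}=F_{j^*}(s_i)^{-1}/F_{j^*}(s_{i'})^{-1}$; since $i,i'$ were arbitrary this yields the global $\lambda$. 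Once this is in hand, the lifting back to $m$ variables via the Kronecker structure is immediate from Definition~\ref{tensor_product_codes_def} and Remark~\ref{21.09.15}, so I would present the $m$-variable reduction first, then devote the bulk of the proof to the refined one-variable lemma.
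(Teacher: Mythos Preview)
Your approach is genuinely different from the paper's, and it is correct under an assumption you have imported without flagging: that $f$ and $F$ factor as $f=\prod_j f_j(x_j)$ and $F=\prod_j F_j(x_j)$. The lemma as written says only $f,F\in\F_{q^t}[\bm x]$ with $f(\cS)\neq 0\neq F(\cS)$, and the paper's proof treats $\TGRSG(\cS,\bm k,f)$ as the evaluation code $\{\ev(\cS,f)(h):\deg_{x_{j^*}}h<k,\ \deg_{x_j}h<n_j\}$, which makes sense for arbitrary $f$. If $f$ does not split variable-by-variable, the code is no longer a Kronecker product and your collapse to the single factor $\GRS(S_{j^*},k,f_{j^*})$ is unavailable. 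That said, Definition~\ref{tensor_product_codes_def} only defines $\TGRSG$ for factored $g$, and in the lemma's sole application (Theorem~\ref{21.09.16}) both polynomials do factor, so this is more a hypothesis you should state than a fatal flaw.

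For comparison, the paper's $(\Leftarrow)$ argument runs directly in $m$ variables. One writes $\ev(\cS,f)(1)=\ev(\cS,F)\bigl(\sum_{r=0}^{k-1}\lambda_r\,x_{j^*}^{\,r}\bigr)$ with each $\lambda_r$ a polynomial in the variables $x_j$, $j\neq j^*$; multiplying pointwise by $s_{i,j^*}$ gives $\ev(\cS,f)(x_{j^*})=\ev(\cS,F)\bigl(x_{j^*}\sum_r\lambda_r\,x_{j^*}^{\,r}\bigr)$, and comparing with the direct expansion of $\ev(\cS,f)(x_{j^*})$ inside $\TGRSG(\cS,\bm k,F)$ (both sides now reduced and of $x_{j^*}$-degree $\le k<n_{j^*}$) forces $\lambda_{k-1}=0$. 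Iterating peels off $\lambda_{k-2},\dots,\lambda_1$ one at a time, leaving $F/f=\lambda_0$ with $\deg_{x_{j^*}}\lambda_0=0$. This never needs the tensor structure or the factorisation of $f,F$.

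Your one-variable MDS argument is correct and is a clean way to upgrade \cite[Lemma~2.5]{GYHZ} from $k\le n/2$ to $k<n$: for any $i\neq i'$ the existence of $k-1$ other evaluation points (available precisely when $k\le n_{j^*}-1$) yields a one-dimensional subspace of codewords vanishing there, and equating its two descriptions gives the global scalar $\lambda$. So the trade-off is: your route is shorter and more conceptual once the reduction is in place, while the paper's degree-stripping argument is self-contained and handles arbitrary (non-split) $f,F$.
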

\begin{proof}
$(\Rightarrow)$ Assume $\deg_{x_{j^*}}\left(\frac{F}{f}\right)=0.$ \rmv{Define $h$ as the element in $\F_{q^t}[{\bm x}]$ such that $h=\frac{F}{f}.$} Recall $\ev({\cS},f)(h)=\left(\frac{\displaystyle h (\bm{s}_1)}{\displaystyle f(\bm{s}_1)},\ldots,\frac{\displaystyle h(\bm{s}_n)}{\displaystyle f(\bm{s}_n)}\right).$ Take $\ev({\cS},f)(h) \in \TGRSG(\cS,{\bm k},f).$ Then $\deg_{x_{j^*}}\left(h \right) <k.$ As $\deg_{x_{j^*}}\left(\frac{F}{f}\right)=0,$ we have that $\deg_{x_{j^*}}\left(h \frac{F}{f}\right)<k,$ which means that $\ev({\cS},F)(h \frac{F}{f}) \in \TGRSG(\cS,{\bm k},F).$ Thus, $\ev({\cS},f)(h) = \ev({\cS},F)(h \frac{F}{f}) \in \TGRSG(\cS,{\bm k},F).$ Now take $\ev({\cS},F)(h) \in \TGRSG(\cS,{\bm k},F).$ Then $\deg_{x_{j^*}}\left(h \right) <k.$ As $\deg_{x_{j^*}}\left(\frac{f}{F}\right)=\deg_{x_{j^*}}\left(\frac{F}{f}\right)=0,$ we have that $\deg_{x_{j^*}}\left(h \frac{f}{F}\right)<k,$ which means that $\ev({\cS},f)(h \frac{f}{F}) \in \TGRSG(\cS,{\bm k},f).$ Thus, $\ev({\cS},F)(h) = \ev({\cS},f)(h \frac{f}{F}) \in \TGRSG(\cS,{\bm k},f).$

$(\Leftarrow)$ Assume $\TGRSG(\cS,{\bm k},f)=\TGRSG(\cS,{\bm k},F).$ As $\ev({\cS},f)(1), \ldots, \ev({\cS},f)(x_{j^*}^{k-1}) \in \TGRSG(\cS,{\bm k},f)=\TGRSG(\cS,{\bm k},F),$ there are $\lambda_p^\ell \in \F_{q^t}[{\bm x}],$ with $p,t \in\{0,\ldots,k-1\},$ such that
\begin{eqnarray*}
\ev({\cS},f)(1)&=&\ev({\cS},F)(\lambda_0^0+\lambda^0_1x_{j^*}+\cdots+\lambda^0_{k-1}x_{j^*}^{k-1}),\\
\ev({\cS},f)(x_{j^*})&=&\ev({\cS},F)(\lambda^1_0+\lambda^1_1x_{j^*}+\cdots+\lambda^1_{k-1}x_{j^*}^{k-1}),\\
&\vdots&\nonumber\\
\ev({\cS},f)(x_{j^*}^{k-1})&=&\ev({\cS},F)(\lambda^{k-1}_0+\lambda^{k-1}_1x_{j^*}+\cdots+\lambda^{k-1}_{k-1}x_{j^*}^{k-1}),
\end{eqnarray*}
where $\deg_{x_j}(\lambda_p^\ell)< n_j$ for $j\in [m]\setminus \{j^*\}$ and $\deg_{x_{j^*}}(\lambda_p^\ell) = 0$ for all $p,t \in\{0,\ldots,k-1\}.$ Observe that for every $r\in[k-1],$
\[\ev({\cS},f)(x_{j^*}^r)
=\ev({\cS},f)(1\cdot x_{j^*}^r)=\ev({\cS},F)((\lambda^0_0+\lambda^0_1x_{j^*}+\cdots+\lambda^0_{k-1}x_{j^*}^{k-1})\cdot x^r_{j^*}).\]
Thus, for every $r\in[k-1],$
\[\ev({\cS},F)((\lambda^0_0+\lambda^0_1x_{j^*}+\cdots+\lambda^0_{k-1}x_{j^*}^{k-1})\cdot x^r_{j^*}) =
\ev({\cS},F)(\lambda^r_0+\lambda^r_1x_{j^*}+\cdots+\lambda^r_{k-1}x_{j^*}^{k-1}),\]
which means that
\[(\lambda^0_0+\lambda^0_1x_{j^*}+\cdots+\lambda^0_{k-1}x_{j^*}^{k-1})\cdot x^r_{j^*} =
\lambda^r_0+\lambda^r_1x_{j^*}+\cdots+\lambda^r_{k-1}x_{j^*}^{k-1} \mod I(\cS).\]
Define $h_r:=(\lambda^0_0+\lambda^0_1x_{j^*}+\cdots+\lambda^0_{k-1}x_{j^*}^{k-1})\cdot x^r_{j^*}$ and
$h^\prime_r:=\lambda^r_0+\lambda^r_1x_{j^*}+\cdots+\lambda^r_{k-1}x_{j^*}^{k-1}.$
Recall that the generators of the vanishing ideal $I(\cS)$ have degree $n_j$ respect to $x_j,$ for $j \in [m].$ As $\deg_{x_j}(\lambda_p^\ell)< n_j$ and  $\deg_{x_j}(h_r),$ $\deg_{x_j}(h^\prime_r)< n_j$ for $r\in[k-1]$ and $j\in[m]\setminus \{j^*\}.$ We can also see that $\deg_{x_{j^*}}(h^\prime_r) < k <  n_{j^*}$ for $r\in[k-1].$ Thus, in order to be able to compare $h_r$ and $h^\prime_r,$ we just need to know $\deg_{x_{j^*}}(h_r).$

As $\deg_{x_j^*}(h_1) = k < n_{j^*},$  $h_1=h_1^\prime.$ Thus, $\lambda_{k-1}^0=0.$
As $\lambda_{k-1}^0=0,$  $\deg_{x_j^*}(h_2) = k < n_{j^*}.$ This implies that $h_2=h_2^\prime.$ Thus, $\lambda_{k-2}^0=0.$
By induction, we see that $\lambda_{k-1}^0 = \lambda_{k-2}^0 = \cdots = \lambda_{2}^0.$ As a consequence, $\deg_{x_j^*}(h_{k-1}) = k < n_{j^*}.$ Thus, $h_{k-1}=h_{k-1}^\prime,$  which means that $\lambda_{1}^0=0.$ We conclude that $\ev({\cS},f)(1)=\ev({\cS},F)(\lambda_0^0).$ Then, $\frac{F}{f}=\lambda_0^0,$ from which we get that $\deg_{x_{j^*}}\left(\frac{F}{f}\right)=0.$
\rmv{
Because the degrees of the polynomials on both sides of previous equation, and the degrees of the generator of the vanishing ideal $I(\cS),$ 
\[(\lambda_0+\lambda_1x_{j^*}+\cdots+\lambda_{k-1}x_{j^*}^{k-1})\cdot x_{j^*} =
\beta_0+\beta_1x_{j^*}+\cdots+\beta_{k-1}x_{j^*}^{k-1}.\]
We conclude 

As $\ev({\cS},f)(1), \ev({\cS},f)(x_{j^*}^{k_{j^*}-1}) \in \TGRSG(\cS,{\bm k},f)=\TGRSG(\cS,{\bm k},F),$ there are $h,$ $h^\prime \in \F_{q^t}[{\bm x}]$ such that $\deg_{x_{j^*}}(h),$ $\deg_{x_{j^*}}(h^\prime)<k_{j^*},$ and
 \[\ev({\cS},f)(1)=\ev({\cS},F)(h) \quad \text{ and } \quad \ev({\cS},f)(x_{j^*}^{k_{j^*}-1})=\ev({\cS},F)(h^\prime).\]
This means that $\ev({\cS},F)(hx_{j^*}^{k_{j^*}-1} - h^\prime)=\ev({\cS},F)(hx_{j^*}^{k_{j^*}-1})-\ev({\cS},F)(h^\prime)=
\ev({\cS},f)(x_{j^*}^{k_{j^*}-1})-\ev({\cS},f)(x_{j^*}^{k_{j^*}-1})={\bm 0}.$ Thus $hx_{j^*}^{k_{j^*}-1} - h^\prime \in I(\cS).$ As $\deg_{x_{j^*}}(hx_{j^*}^{k_{j^*}-1} - h^\prime)\leq 2k_{j^*}-2 < n_{j^*},$ and $\deg_{x_{j}}(hx_{j^*}^{k_{j^*}-1} - h^\prime)<n_j,$ for all $j\in[m]\setminus \{j^*\},$ then $hx_{j^*}^{k_{j^*}-1} = h^\prime.$ Since $\deg_{x_{j^*}}(h^\prime)<k_{j^*},$  $\deg_{x_{j^*}}(h)=0.$ Observe that the equation $\ev({\cS},f)(1)=\ev({\cS},F)(h)$ implies that $h=\frac{F}{f}.$ Thus we obtain that $\deg_{x_{j^*}}\left(\frac{F}{f}\right)=\deg_{x_{j^*}}\left(h\right)=0.$}
\end{proof}
{Observe that the condition $\deg_{x_{j^*}}\left(\frac{F}{f}\right)=0$ means that there is an element $p({\bm x})$ in $\F_{q^t}[{\bm x}]$ such that $\deg_{x_{j^*}}\left(p \right)=0$ and $p({\bm s}_i)=\frac{F({\bm s}_i)}{f({\bm s}_i)},$ which happens if and only if $F-pf \in I(\cS).$ When $m=1,$ $p=\lambda \in \F_{q^t}.$ Since $\deg(F-\lambda f)<n,$  $F=\lambda f.$ Thus, for the case $m=1,$ {\it i.e.} only one variable, if $\TGRSG(\cS, k,f)=\TGRSG(\cS,k,F)$ and $k<n,$ then $F=\lambda f,$ which is \cite[Lemma 2.5]{GYHZ} without the restriction $k\leq \frac{n}{2}.$}

By Remark~\ref{21.09.15}, if $\TGRSG(\cS,g)$ is one of the trivial spaces $\{ \bm{0} \}$ or $\F_{q^t}^n,$ then the dual is also a tensor product of GRS codes via Goppa codes. For the case when $\TGRSG(\cS,g)$ is nontrivial, we have the following result.

\begin{theorem}\label{21.09.16}
Given $g=g_1\ldots g_m \in \F_{q^t}[{\bm x}],$ there exists $f=f_1\ldots f_m \in \F_{q^t}[{\bm x}]$ such that
\[\TGRSG(\cS,g)^\perp=\TGRSG(\cS,f),\]
if and only for some $j^*\in [m],$ the following hold:
\begin{itemize}
\rmv{\item[\rm (i)] $\deg(g_{j^*})\geq n_{j^*}/2,$}
\item[\rm (i)] $\deg(f_{j^*} g_{j^*})=n_{j^*},$
\item[\rm (ii)] $\deg(f_j)=\deg(g_j)=n_j,$ for all $j\in[m]\setminus \{j^*\},$ and
\rmv{\item[\rm (i)] $g=g_1\cdots g_m,$ $f=f_1\cdots f_m,$}
\item[\rm (iii)]$\deg_{x_{j^*}}\left(\frac{fg}{L}\right)=0.$
\rmv{\item[\rm (iii)] $fg-Lp \in I(\cS),$ for some $p\in \F_{q^t}[{\bm x}]$ such that $\deg_{x_{j^*}}\left(p\right)=0.$}
\rmv{\item[\rm (ii)] $\deg(f_{j^*} g_{j^*})=n_{j^*},$ for some $j^*\in [m],$ and
for all $p\in \F_{q^t}[{\bm x}]$ such that}
\end{itemize}

\end{theorem}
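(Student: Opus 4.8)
The plan is to push the identity $\TGRSG(\cS,g)^{\perp}=\TGRSG(\cS,f)$ into the common language of monomial--Cartesian codes. By Theorem~\ref{21.09.10}, $\TGRSG(\cS,g)^{\perp}=\acar(\cS,g)=\cC\!\big(\cS,\mathcal{A}_{Car}({\bm k}_g),L/g\big)$ with ${\bm k}_g=(n_1-\deg(g_1),\dots,n_m-\deg(g_m))$, while $\TGRSG(\cS,f)=\cC\!\big(\cS,\prod_{j}\{0,\dots,\deg(f_j)-1\},f\big)$ is a \emph{box}-shaped monomial--Cartesian code. Since we are in the case where $\TGRSG(\cS,g)$ is nontrivial, Remark~\ref{21.09.15} gives $\deg(g_j)\ge 1$ for every $j$ and $\deg(g_{j_0})<n_{j_0}$ for at least one $j_0$, and dualizing yields the same for $f$.

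For the forward direction the crucial step is to establish (ii). Write $\TGRSG(\cS,f)=\bigotimes_{j}C_j$ and $\TGRSG(\cS,g)=\bigotimes_{j}D_j$ with $C_j=\GRS(S_j,\deg(f_j),f_j)$, $D_j=\GRS(S_j,\deg(g_j),g_j)$, and recall $(\bigotimes_jC_j)^{\perp}=\sum_{j=1}^{m}\F_{q^t}^{n_1}\otimes\cdots\otimes C_j^{\perp}\otimes\cdots\otimes\F_{q^t}^{n_m}$ (an easy induction from the two-factor case), where each $C_j^{\perp}$ is an $[n_j,\,n_j-\deg(f_j)]$ code. Calling the $j$-th summand $E_j$, the hypothesis is $\sum_jE_j=\bigotimes_jD_j$. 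Fix an index $j_1$ and view each codeword as an $n_{j_1}\times N$ array ($N=\prod_{l\ne j_1}n_l$) by grouping the $j_1$-th tensor leg first; compare the spans of the columns of both sides. The column span of $\bigotimes_jD_j=D_{j_1}\otimes\big(\bigotimes_{l\ne j_1}D_l\big)$ is $D_{j_1}$, since $\bigotimes_{l\ne j_1}D_l\ne\{\bm{0}\}$ (every $\deg(g_l)\ge1$), whereas the column span of $\sum_jE_j$ is the sum of the column spans of the $E_j$, which is $C_{j_1}^{\perp}$ if $C_l^{\perp}=\{\bm{0}\}$ for all $l\ne j_1$ and is all of $\F_{q^t}^{n_{j_1}}$ otherwise. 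Hence for every $j_1$ either $\deg(g_{j_1})=n_{j_1}$, or else $\deg(f_l)=n_l$ for all $l\ne j_1$, in which case comparing row spans likewise forces $\deg(g_l)=n_l$ for all $l\ne j_1$. As not every $\deg(g_j)$ equals $n_j$, choosing $j^{*}$ with $\deg(g_{j^{*}})<n_{j^{*}}$ and running this dichotomy at $j_1=j^{*}$ gives $\deg(g_l)=\deg(f_l)=n_l$ for all $l\ne j^{*}$, which is (ii) and shows $j^{*}$ is unique; combined with $\deg(g_j)\ge1$ and $\deg(g_{j^{*}})<n_{j^{*}}$ we get $1\le\deg(g_{j^{*}})<n_{j^{*}}$. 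Condition (i) then falls out of the dimension count $\prod_jn_j-\prod_j\deg(g_j)=\dim\TGRSG(\cS,g)^{\perp}=\dim\TGRSG(\cS,f)=\prod_j\deg(f_j)$ upon cancelling the common factor $\prod_{l\ne j^{*}}n_l$, yielding $\deg(g_{j^{*}})+\deg(f_{j^{*}})=n_{j^{*}}$, i.e.\ $\deg(f_{j^{*}}g_{j^{*}})=n_{j^{*}}$.

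Once (i) and (ii) hold, $\mathcal{A}_{Car}({\bm k}_g)$ degenerates to the product $\cA:=\prod_{l\ne j^{*}}\{0,\dots,n_l-1\}\times\{0,\dots,\deg(f_{j^{*}})-1\}$, so, writing ${\bm k}:=(n_1,\dots,n_{j^{*}-1},\deg(f_{j^{*}}),n_{j^{*}+1},\dots,n_m)$ with $1\le\deg(f_{j^{*}})<n_{j^{*}}$, both codes fall under Definition~\ref{tensor_product_codes_def}: $\TGRSG(\cS,g)^{\perp}=\cC(\cS,\cA,L/g)=\TGRSG(\cS,{\bm k},L/g)$ and $\TGRSG(\cS,f)=\cC(\cS,\cA,f)=\TGRSG(\cS,{\bm k},f)$. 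Lemma~\ref{21.10.28} now applies verbatim: $\TGRSG(\cS,{\bm k},f)=\TGRSG(\cS,{\bm k},L/g)$ iff $\deg_{x_{j^{*}}}\!\big(\frac{L/g}{f}\big)=\deg_{x_{j^{*}}}\!\big(\frac{L}{fg}\big)=0$, which, since a function on $\cS$ not involving $x_{j^{*}}$ has a reciprocal with the same property, is exactly (iii). This step does double duty: read forwards it finishes the forward implication (given (i) and (ii), (iii) is forced by the hypothesis), and read backwards it is the converse — assuming (i)--(iii) for some $j^{*}$, nontriviality plus (ii) again forces $1\le\deg(g_{j^{*}})<n_{j^{*}}$, (ii) identifies both codes with $\TGRSG(\cS,{\bm k},\cdot)$ for the common ${\bm k}$, and Lemma~\ref{21.10.28} together with (iii) gives $\TGRSG(\cS,g)^{\perp}=\TGRSG(\cS,f)$.

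The main obstacle is the first part of the forward direction, namely forcing (ii): a priori a non-box (``L-shaped'') defining set $\mathcal{A}_{Car}({\bm k}_g)$ might span the same code as a box after multiplication by the unit $L/g$, and one must rule this out. The reshaping/column-span argument above is the cleanest route I see; an alternative is to note combinatorially that $\mathcal{A}_{Car}({\bm k})$ is a box precisely when at most one $k_j$ lies strictly between $0$ and $n_j$, and then argue that the $\cL$-span of a decreasing non-box set is never a multiplicative translate of the $\cL$-span of a box. Everything past (ii) is bookkeeping plus one invocation of Lemma~\ref{21.10.28}.
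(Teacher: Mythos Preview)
Your proof is correct, and the overall architecture matches the paper's: reduce via Theorem~\ref{21.09.10} to comparing $\acar(\cS,g)$ with $\TGRSG(\cS,f)$, force the ``box shape'' (condition~(ii)), read off~(i) from dimensions, and then invoke Lemma~\ref{21.10.28} for~(iii) and for the converse. Where you genuinely diverge is in the mechanism for~(ii). The paper argues via the \emph{generating monomial} formalism of decreasing monomial--Cartesian codes: $\TGRSG(\cS,f)$, being a box, has a unique maximal monomial, so if it coincides with $\acar(\cS,{\bm k}_g,L/g)$ then the generating set $\mathcal{B}=\{x_1^{n_1-1}\cdots x_m^{n_m-1}/x_j^{\deg g_j}\}$ of the latter must collapse to a single element, forcing $\deg(g_j)=n_j$ for all but one~$j$. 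You instead exploit the identity $(\bigotimes_j C_j)^{\perp}=\sum_j \F_{q^t}^{n_1}\otimes\cdots\otimes C_j^{\perp}\otimes\cdots\otimes\F_{q^t}^{n_m}$ and compare column and row spans of the two sides after reshaping along a fixed tensor leg. Your route is more self-contained---it needs nothing from the monomial--Cartesian machinery beyond Theorem~\ref{21.09.10}---and the column/row-span dichotomy is a clean linear-algebra substitute for the combinatorics of generating sets. The paper's route is shorter once that machinery is in hand and makes the ``only one direction can be non-full'' phenomenon visually obvious from the lattice picture. Both deliver exactly the same~$j^{*}$ and degree constraints, after which the two proofs are identical.
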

\begin{proof}
\rmv{Assume that $\TGRSG(\cS,g)$ is nontrivial.} By Theorem~\ref{21.09.10}, we just need to check that $\TGRSG(\cS,f)=\acar(\cS,g)$ if and only if  (i)-(iii) are valid. By Definition~\ref{21.10.21}, $\acar\left(\mathcal{S},{\displaystyle g}\right) = \acar\left(\mathcal{S}, {\bm k}_g,\frac{\displaystyle L}{\displaystyle g}\right),$ where ${\bm k}_g = \left(n_1-\deg(g_1), \ldots, n_m-\deg(g_m) \right).$ Thus, we will prove that $\TGRSG(\cS,f) = \acar\left(\mathcal{S}, {\bm k}_g,\frac{\displaystyle L}{\displaystyle g}\right)$ if and only if (i)-(iii) are true. Denote  the $j$-th standard vector in $\F_{q^t}^m$ by $\bm{e}_j$.

$(\Leftarrow)$ Assume (i)-(iii). By (iii), $\deg_{x_{j^*}}\left(\frac{L}{fg}\right)=\deg_{x_{j^*}}\left(\frac{fg}{L}\right)=0.$ There is $p({\bm x})\in \F_{q^t}[{\bm x}]$ such that $\deg_{x_{j^*}}\left(p \right)=0$ and $p({\bm s}_i)=\frac{L({\bm s}_i)}{(fg)({\bm s}_i)}.$ Then $\frac{L({\bm s}_i)}{g({\bm s}_i)}=(fp)({\bm s}_i),$ which means that $\deg_{x_{j^*}}\left(\frac{L}{g}\right)=\deg_{x_{j^*}}\left(f\right)=\deg\left(f_{j^*}\right).$ By (ii), ${\bm k}_g = \left(0, \ldots, n_{j^*}-\deg(g_{j^*}), \ldots, 0) \right)=\left(n_{j^*}-\deg(g_{j^*})\right)\bm{e}_{j^*}.$ Using (i),
${\bm k}_g = \deg(f_{j^*})\bm{e}_{j^*}.$ Thus, due Definition~\ref{21.10.21}, $\acar\left(\mathcal{S}, {\bm k}_g,\frac{\displaystyle L}{\displaystyle g}\right)$ is generated by the vectors
$\left(\frac{\bm{s}_1^{\bm{a}}}{\frac{L}{g}(\bm{s}_1)}, \ldots, \frac{\bm{s}_n^{\bm{a}}}{\frac{L}{g}(\bm{s}_n)} \right),$
where $0\leq a_{j} < n_j,$ for all $j\in[m]\setminus \{j^*\},$ and $0\leq a_{j^*} < \deg(f_{j^*}).$ We conclude that
for ${\bm k}:=(n_1,\ldots, n_{j^* -1}, \deg(f_{j^*}), n_{j^* +1}, \ldots,n_m)$,
$\acar\left(\mathcal{S}, {\bm k}_g,\frac{\displaystyle L}{\displaystyle g}\right)=\TGRSG\left(\cS,{\bm k},\frac{\displaystyle L}{\displaystyle g}\right).$
By (ii) $\TGRSG\left(\cS,{\bm k},f\right)=\TGRSG(\cS,f).$ Combining (iii) and Lemma~\ref{21.10.28}, we obtain
 $\acar\left(\mathcal{S}, {\bm k}_g,\frac{\displaystyle L}{\displaystyle g}\right)=\TGRSG\left(\cS,{\bm k},\frac{\displaystyle L}{\displaystyle g}\right)=\TGRSG\left(\cS,{\bm k},f\right)=\TGRSG(\cS,f).$

$(\Rightarrow)$ Assume $\TGRSG(\cS,f) = \acar\left(\mathcal{S}, {\bm k}_g,\frac{\displaystyle L}{\displaystyle g}\right),$ where ${\bm k}_g = \left(n_1-\deg(g_1), \ldots, n_m-\deg(g_m) \right).$ By Remark~\ref{21.09.15}, as $\TGRSG(\cS,f)$ is nontrivial, then $\deg(g_j)>0,$ for $j \in [m].$ According to the proof of Lemma~\ref{21.09.03}~(iii), $\mathcal{B}=\left\{\displaystyle \frac{\displaystyle x_1^{n_1-1}\cdots  x_m^{n_m-1}}{x_j^{\deg(g_j)}} : j \in [m] \right\}$ is a generating set of $\acar\left(\mathcal{S}, {\bm k}_g,\frac{\displaystyle L}{\displaystyle g}\right).$ By Definition~\ref{tensor_product_codes_def}, there is a unique {generating monomial} for $\TGRSG(\cS,f),$ meaning a monomial $\bm{x}^{\bm{a}}\in \F_{q^t}[{\bm x}]$ such that $\bm{x}^{\bm{b}}$ divides $\bm{x}^{\bm{a}}$ if and only if $\ev({\cS},f)(\bm{x}^{\bm{b}})$ is in $\TGRSG(\cS,f).$ This means that the augmented code $\acar\left(\mathcal{S}, {\bm k}_g,\frac{\displaystyle L}{\displaystyle g}\right)$ has a unique generating monomial, and it should be one of the elements in $\mathcal{B}.$ Thus, there is $j^* \in [m]$ such that $M:=\frac{\displaystyle x_1^{n_1-1}\cdots  x_m^{n_m-1}}{x_{j^*}^{\deg(g_{j^*})}}$ is the generating monomial for both $\TGRSG(\cS,f)$ and $\acar\left(\mathcal{S}, {\bm k}_g,\frac{L}{g}\right).$ As $M$ is a generating monomial of $\TGRSG(\cS,f),$ then $\deg(f_j)=n_j,$ for all $j\in[m]\setminus \{j^*\},$ and $\deg(f_{j^*}) = n_{j^*}-\deg(g_{j^*}).$ As $M$ is a generating monomial of $\acar\left(\mathcal{S}, {\bm k}_g,\frac{L}{g}\right),$ then ${\bm k}_g = \left(0, \ldots, n_{j^*}-\deg(g_{j^*}), \ldots, 0 \right),$ which implies $\deg(g_j)=n_j,$ for all $j\in[m]\setminus \{j^*\}.$ Thus, (i)-(ii) are valid and
$\TGRSG(\cS,{\bm k},f) = \TGRSG(\cS,f) = 
\acar\left(\mathcal{S}, {\bm k}_g,\frac{\displaystyle L}{\displaystyle g}\right)=\TGRSG\left(\cS,{\bm k},\frac{\displaystyle L}{\displaystyle g}\right),$ where ${\bm k}:=(n_1,\ldots, n_{j^* -1}, \deg(f_{j^*}), n_{j^* +1}, \ldots,n_m)$. By Lemma~\ref{21.10.28}, (iii) is also true.
\end{proof}
In \cite{GYHZ}, the authors use Goppa codes (the case $t=m=1$) to prove that the intersection of certain generalized Reed-Solomon codes are also generalized Reed-Solomon codes. As a consequence, they determine the hulls of certain generalized Reed-Solomon codes. Here, our focus is slightly different. Even so, taking the special case $t=m=1$ allows us to recover those results. More generally, the hull of a tensor product of generalized Reed-Solomon codes via Goppa codes is also a tensor product of generalized Reed-Solomon codes via Goppa codes, and the hull of a multivariate Goppa code contains a multivariate Goppa code (with equality when $t=1$). More precisely, we have the following result.

\begin{corollary} \label{hull_cor}
Let $\cS\subseteq \F_{q^t}^m$, $g$ and $f$ be as in Theorem~\ref{21.09.16}. Then the following hold.
\begin{itemize}
\item[\rm{(i)}] $\hull\left(\TGRSG(\cS,g)\right)=\TGRSG(\cS,\gcd(f,g))=\hull\left(\acar(\cS,g)\right).$
\item[\rm{(ii)}] $\Gamma(\cal S, \lcm(f,g))\subseteq \hull\left(\Gamma(\cal S, g)\right),$ with equality when $t=1.$
\end{itemize}
\end{corollary}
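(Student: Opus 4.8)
The plan is to reduce everything to results already established: Theorem~\ref{21.09.10} ($\TGRSG(\cS,h)^\perp=\acar(\mathcal{S},h)$), Theorem~\ref{21.09.16} ($\TGRSG(\cS,g)^\perp=\TGRSG(\cS,f)$), Theorem~\ref{21.09.17}~(i) (the intersection of two tensor products of GRS codes via Goppa codes is the one attached to the $\gcd$), and Proposition~\ref{21.09.14}; together with the standing identities $\hull(C)=C\cap C^\perp=\hull(C^\perp)$ (the last since $\hull(C^\perp)=C^\perp\cap C$) and the elementary inclusion $C_q\subseteq\tr(C)$, valid for every linear code $C$ over $\F_{q^t}$ and deduced from Delsarte's theorem: $C_q\subseteq C$ gives $(C^\perp)_q\subseteq C^\perp\subseteq (C_q)^\perp=\tr(C^\perp)$, and as $\tr(C)=((C^\perp)_q)^\perp$ this yields $C_q\subseteq\tr(C)$. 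One preliminary observation: by Theorem~\ref{21.09.16}~(ii), $\deg(f_j)=\deg(g_j)=n_j$ for $j\ne j^*$, so $\GRS(S_j,\deg(f_j),f_j)=\GRS(S_j,\deg(g_j),g_j)=\F_{q^t}^{n_j}$; replacing each such $f_j$ by $g_j$ alters neither $\TGRSG(\cS,f)$ nor condition (iii) of Theorem~\ref{21.09.16} (which depends only on $f_{j^*}$), so I may and do assume $f_j=g_j$ for every $j\ne j^*$.

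For part (i), the second equality is immediate: by Theorem~\ref{21.09.10}, $\acar(\mathcal{S},g)=\TGRSG(\cS,g)^\perp$, hence $\hull(\acar(\mathcal{S},g))=\hull(\TGRSG(\cS,g)^\perp)=\hull(\TGRSG(\cS,g))$. For the first equality I would write $\hull(\TGRSG(\cS,g))=\TGRSG(\cS,g)\cap\TGRSG(\cS,g)^\perp=\TGRSG(\cS,g)\cap\TGRSG(\cS,f)$ by Theorem~\ref{21.09.16}, and compute this intersection. Both factors are tensor products whose $j$-th factor equals $\F_{q^t}^{n_j}$ for $j\ne j^*$; using $\bigl(\bigotimes_j A_j\bigr)\cap\bigl(\bigotimes_j B_j\bigr)=\bigotimes_j(A_j\cap B_j)$ for subspaces, the intersection is coordinatewise: it is $\F_{q^t}^{n_j}=\GRS(S_j,\deg(\gcd(f_j,g_j)),\gcd(f_j,g_j))$ for $j\ne j^*$ (with $\gcd(f_j,g_j)=g_j$), while in coordinate $j^*$ one has $\deg(g_{j^*}f_{j^*})=n_{j^*}$ by Theorem~\ref{21.09.16}~(i), so the one-variable case of Theorem~\ref{21.09.17}~(i) applies and gives $\GRS(S_{j^*},\deg(g_{j^*}),g_{j^*})\cap\GRS(S_{j^*},\deg(f_{j^*}),f_{j^*})=\GRS(S_{j^*},\deg(\gcd(f_{j^*},g_{j^*})),\gcd(f_{j^*},g_{j^*}))$. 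Reassembling the tensor product yields $\hull(\TGRSG(\cS,g))=\TGRSG(\cS,\gcd(f,g))$.

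For part (ii) I would first establish the inclusion. Since $\lcm(f,g)=g\cdot\tfrac{\lcm(f,g)}{g}$ and $\tfrac{\lcm(f,g)}{g}=\tfrac{f}{\gcd(f,g)}$ divides $f$ (hence is nonzero on $\cS$), Proposition~\ref{21.09.14}~(ii) gives $\Gamma(\cal S,\lcm(f,g))\subseteq\Gamma(\cal S,g)$, and symmetrically $\Gamma(\cal S,\lcm(f,g))\subseteq\Gamma(\cal S,f)$. By Theorems~\ref{21.09.12}, \ref{21.09.10}, \ref{21.09.16}, $\Gamma(\cal S,f)=\acar(\mathcal{S},f)_q=\bigl(\TGRSG(\cS,f)^\perp\bigr)_q=\bigl(\TGRSG(\cS,g)\bigr)_q\subseteq\tr(\TGRSG(\cS,g))=\Gamma(\cal S,g)^\perp$, the last step by Corollary~\ref{21.10.27}. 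Thus $\Gamma(\cal S,\lcm(f,g))\subseteq\Gamma(\cal S,g)\cap\Gamma(\cal S,g)^\perp=\hull(\Gamma(\cal S,g))$. For equality when $t=1$: then $\F_q=\F_{q^t}$, so $\Gamma(\cal S,h)=\acar(\mathcal{S},h)$ for every admissible $h=h_1\cdots h_m$ with $h(\cS)\ne 0$; hence $\hull(\Gamma(\cal S,g))=\hull(\acar(\mathcal{S},g))=\TGRSG(\cS,\gcd(f,g))$ by part (i), while $\Gamma(\cal S,\lcm(f,g))=\acar(\mathcal{S},\lcm(f,g))=\TGRSG(\cS,\lcm(f,g))^\perp$ by Theorem~\ref{21.09.10}. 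Finally, one checks that the pair $\bigl(\lcm(f,g),\gcd(f,g)\bigr)$ satisfies (i)--(iii) of Theorem~\ref{21.09.16} with the same $j^*$ (in coordinates $j\ne j^*$ both polynomials equal $g_j$; $\deg(\gcd(f_{j^*},g_{j^*}))+\deg(\lcm(f_{j^*},g_{j^*}))=\deg(f_{j^*})+\deg(g_{j^*})=n_{j^*}$; and $\lcm(f,g)\gcd(f,g)=fg$, so $\deg_{x_{j^*}}\!\bigl(\tfrac{\lcm(f,g)\gcd(f,g)}{L}\bigr)=\deg_{x_{j^*}}\!\bigl(\tfrac{fg}{L}\bigr)=0$), so $\TGRSG(\cS,\lcm(f,g))^\perp=\TGRSG(\cS,\gcd(f,g))$; combining gives $\Gamma(\cal S,\lcm(f,g))=\hull(\Gamma(\cal S,g))$.

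The step I expect to demand the most care is the coordinatewise reduction in part (i): because $\deg(g_jf_j)=2n_j$ for $j\ne j^*$, Theorem~\ref{21.09.17}~(i) cannot be invoked directly when $m\ge 2$, and one must observe explicitly that both tensor factors degenerate to the full space $\F_{q^t}^{n_j}$ there, so that all genuine content lives in the single coordinate $j^*$, where the one-variable picture of \cite{GYHZ} controls the hull; the normalization $f_j=g_j$ for $j\ne j^*$ is precisely what makes the bookkeeping with $\gcd(f,g)$ and $\lcm(f,g)$ match the intended codes.
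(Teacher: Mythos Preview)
Your proof is correct and follows the same overall strategy as the paper---reduce to Theorems~\ref{21.09.10}, \ref{21.09.16}, and \ref{21.09.17}---but you are more careful at one point where the paper is terse. The paper's proof of (i) simply invokes Theorem~\ref{21.09.17}(i) for the pair $(f,g)$; however, that theorem carries the hypothesis $\deg(f_jg_j)\le n_j$ for all $j$, which fails in every coordinate $j\ne j^*$ since $\deg(f_j)=\deg(g_j)=n_j$ there. Your normalization $f_j=g_j$ for $j\ne j^*$ and your coordinatewise reduction (using that both tensor factors equal $\F_{q^t}^{n_j}$ in those coordinates, so the intersection is determined entirely in coordinate $j^*$, where the one-variable hypothesis $\deg(f_{j^*}g_{j^*})=n_{j^*}$ does hold) is exactly the right way to close this gap. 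For (ii), the paper argues $\Gamma(\cS,g)^\perp=\tr(\acar(\cS,f))\supseteq\Gamma(\cS,f)$ and then applies Theorem~\ref{21.09.17}(ii) to get $\Gamma(\cS,g)\cap\Gamma(\cS,f)=\Gamma(\cS,\lcm(f,g))$; your route via Proposition~\ref{21.09.14}(ii) and, for $t=1$, the observation that $(\lcm(f,g),\gcd(f,g))$ again satisfies the hypotheses of Theorem~\ref{21.09.16}, is a slightly different but equally valid path that again sidesteps the degree hypothesis of Theorem~\ref{21.09.17}.
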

\begin{proof}
(i) By Theorems~\ref{21.09.10} and~\ref{21.09.16}, 
\begin{equation*}
\TGRSG(\cS,f)=\TGRSG(\cS,g)^\perp=\acar(\cS,g) \text{ and } \TGRSG(\cS,g)= \TGRSG(\cS,f)^\perp=\acar(\cS,f).
\end{equation*}
Thus, the result is a consequence of Theorem~\ref{21.09.17} (i).

(ii) By the poof of (i), $\TGRSG(\cS,g)=\TGRSG(\cS,f)^\perp=\acar(\cS,f).$ By Corollary \ref{mvG_dim},
$\Gamma(\cS,g)^\perp = \tr \left( \TGRSG(\cS,g) \right) = \tr \left( \acar(\cS,f) \right) \supseteq \Gamma(\cS,f).$ Thus,
\[\hull\left(\Gamma(\cal S, g)\right)= \Gamma(\cal S, g) \cap \Gamma(\cal S, g)^\perp \supseteq \Gamma(\cal S, g) \cap \Gamma(\cal S, f)
=\Gamma(\cal S, \lcm(g,f)),\]
where the last equation holds due to Theorem~\ref{21.09.17}~(ii). When $t=1,$ $\acar(\cS,f)=\Gamma(\cS,f),$ so $ \tr \left( \acar(\cS,f) \right) = \Gamma(\cS,f).$
\end{proof}
Using the conditions in Theorem \ref{21.09.16}, we can also conclude that the dimension of the Hull of the tensor product of GRS via Goppa code is
\begin{equation} \label{dim_Hull_T}
\dim \left( \hull\left(\TGRSG(\cS,g)\right) \right)= \dim \left( \TGRSG(\cS,\gcd(f,g)) \right) = \deg \left( \gcd(f,g) \right),
\end{equation}
and the dimension of the hull of the multivariate Goppa code is lower bounded by
\begin{equation} \label{dim_Hull_mvG}
\dim \left(   \hull\left(\Gamma(\cal S, g)\right) \right) {\geq} \dim \left( \Gamma(\cal S, \lcm(f,g))\right) {\geq} n - {t}\deg \left( \lcm(f,g) \right),
\end{equation}
with equality when $t=1.$

\section{Quantum, LCD, self-orthogonal and self-dual codes} \label{quantum_section}
In this section, we design entanglement quantum error-correcting codes, LCD, self-orthogonal, and self-dual codes from multivariate Goppa codes and tensor product of GRS codes via a Goppa code relying on the hulls found in the previous section.

Entanglement-assisted quantum error-correcting codes, introduced in \cite{Brun_science}, utilize entangled qubits as an enabling mechanism which allows for any linear code to be used to construct a quantum error-correcting code. These codes are a departure from constructions that employ self-dual codes. Below we use the standard notation $[[n,k,d;c]]_{q}$ code to mean a $q$-ary entanglement-assisted quantum error-correcting code (EAQECC) that encodes $k$ qubits into $n$ qubits, with minimum distance $d$, and $c$ required entangled qubits. Guenda, Jitman and Gulliver \cite{good_ea}, building on work of Wilde and Brune \cite{Wilde}, showed that the shared entanglement necessary can be captured by the dimension of the hull of the linear code used. In particular, they prove the following.

\begin{lemma} \cite[Corollary 3.2]{good_ea}  \label{eaqecc_lma}
Given an $[n,k,d]$ code $C$ over $\F_q$, there exist EAQECCs with parameters
\begin{eqnarray*}
&& [[ n, k - \dim \left(Hull(C)\right), d, n-k- \dim \left(Hull(C)\right)]]_q \quad \text{ and }\\
&& [[ n, n-k - \dim \left(Hull(C)\right), d(C^{\perp}), k- \dim \left(Hull(C)\right)]]_q.
\end{eqnarray*}
\end{lemma}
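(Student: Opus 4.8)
The statement is quoted from \cite{good_ea}, so the task is to reconstruct its proof from the entanglement-assisted (EA) stabilizer formalism of Brun--Devetak--Hsieh (refined by Wilde--Brun, see \cite{Wilde}) together with one linear-algebra identity that converts the ``commutation defect'' of a parity-check matrix into the dimension of the hull. The plan is as follows. Let $H \in \F_q^{(n-k)\times n}$ be a parity-check matrix of $C$, so that the row space of $H$ equals $C^{\perp}$ and $\operatorname{rank}(H)=n-k$. Feeding the pair $(C,C)$ into the CSS-type EA construction over the symplectic space $\F_q^{2n}$ produces a subgroup of the generalized Pauli group whose failure to be self-orthogonal with respect to the symplectic form is governed by the matrix $HH^{T}$; putting this subgroup into the entanglement-assisted canonical form (symplectic Gram--Schmidt) shows that exactly $c:=\operatorname{rank}(HH^{T})$ maximally entangled pairs are needed and that the resulting code has parameters $[[\,n,\ 2k-n+c,\ \geq d;\ c\,]]_{q}$, the distance estimate being inherited from the error-correction capability of $C$.

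\textbf{The hull enters through $\operatorname{rank}(HH^T)$.} The key identity to prove is $\operatorname{rank}(HH^{T}) = (n-k) - \dim\!\big(\hull(C)\big)$. Viewing $HH^{T}$ as the Gram matrix of the rows of $H$ with respect to the Euclidean form, the kernel of $y \mapsto yHH^{T}$ consists precisely of the vectors $y$ for which $yH \in C^{\perp}$ is orthogonal to all of $C^{\perp}$, i.e.\ for which $yH \in C^{\perp}\cap (C^{\perp})^{\perp} = C^{\perp}\cap C = \hull(C)$; since $H$ has full row rank, this kernel has dimension $\dim\!\big(\hull(C)\big)$, whence $c = (n-k) - \dim\!\big(\hull(C)\big)$, which is the claimed ebit count. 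Substituting into the EA parameters above gives $2k-n+c = k - \dim\!\big(\hull(C)\big)$ encoded qudits and minimum distance $d$, which is exactly the first family.

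\textbf{The dual family and the main obstacle.} Running the identical argument with $C^{\perp}$ (an $[n,n-k,d(C^{\perp})]$ code whose hull is $C^{\perp}\cap C = \hull(C)$) in place of $C$ interchanges the roles of $k$ and $n-k$ and replaces $d$ by $d(C^{\perp})$, yielding $[[\,n,\ n-k-\dim(\hull(C)),\ d(C^{\perp});\ k-\dim(\hull(C))\,]]_{q}$, the second family. The genuinely substantive point is the first step: establishing the EA stabilizer construction and that the entanglement cost equals $\operatorname{rank}(HH^{T})$, which rests on the symplectic normal-form argument for a subspace that is isotropic up to a controlled defect; granting that (the content of the cited references), the remainder is routine linear algebra and bookkeeping.
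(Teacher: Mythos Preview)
The paper does not prove this lemma at all; it is stated with a citation to \cite[Corollary 3.2]{good_ea} and used as a black box in Proposition~\ref{21.10.31}. So there is no ``paper's own proof'' to compare against.

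Your reconstruction is correct and is essentially the argument in the cited reference. The two ingredients you isolate are exactly the right ones: (1) the CSS-type entanglement-assisted construction of Brun--Devetak--Hsieh/Wilde--Brun, which from a parity-check matrix $H$ of $C$ produces an $[[n,\,2k-n+c,\,d;\,c]]_q$ EAQECC with $c=\operatorname{rank}(HH^{T})$; and (2) the linear-algebra identity $\operatorname{rank}(HH^{T})=(n-k)-\dim\hull(C)$, whose proof via the kernel description $yH\in C\cap C^{\perp}$ you give cleanly. Substituting (2) into (1) yields the first set of parameters, and applying the same to $C^{\perp}$ (whose hull coincides with $\hull(C)$) gives the second. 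The only point you leave as a citation---that the symplectic Gram--Schmidt procedure on the stabilizer generators makes $c=\operatorname{rank}(HH^{T})$ the exact ebit count---is indeed the substantive content of \cite{Brun_science,Wilde}, and it is appropriate to invoke it rather than reprove it here.
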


\begin{proposition}\label{21.10.31}
Let $\cS\subseteq \F_{q^t}^m$, $g$ and $f$ be as in Theorem~\ref{21.09.16}. Then the code $\TGRSG(\cS,g)$ gives rise to EAQECCs with parameters
\begin{eqnarray*}
&& [[ n, \deg \left( g \right) - \deg \left( \gcd \right), \deg(f_{j^*})+1; \deg \left( f \right) - \deg \left( \gcd \right) ]]_{q^t} \quad \text{ and }\\
&& [[ n, \deg \left( f \right) - \deg \left( \gcd \right), \deg(g_{j^*})+1; \deg \left( g \right) - \deg \left( \gcd \right)]]_{q^t},
\end{eqnarray*}
where $\gcd := \gcd(g,g^\prime).$ The code $\Gamma(\cal S, g)$ gives rise to  EAQECCs with parameters
\begin{eqnarray*}
&& [[ n, \leq t(\deg(\lcm)+\deg(g))-n, \geq \deg(f_{j^*})+1;  \leq t \deg \left( \lcm \right) - \deg \left( g \right)]]_q \quad \text{ and }\\
&& [[ n, \leq t \deg \left( \lcm \right) - \deg \left( g \right), \geq \deg(g_{j^*})+1; \leq t(\deg(\lcm)+\deg(g))-n ]]_{q},
\end{eqnarray*}
where $\lcm := \lcm(g,g^\prime),$ and equalities in the parameters of the codes when $t=1.$
\end{proposition}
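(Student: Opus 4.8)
The plan is to apply the EAQECC existence result Lemma~\ref{eaqecc_lma} twice: once to $C=\TGRSG(\cS,g)$ over $\F_{q^t}$, and once to $C=\Gamma(\cal S, g)$ over $\F_q$. In each case the only data fed into Lemma~\ref{eaqecc_lma} are $n$, $\dim C$ (hence $\dim C^{\perp}=n-\dim C$), $\dim(\hull(C))$, $d(C)$ and $d(C^{\perp})$, and all of these have already been pinned down or bounded in Sections~\ref{mvGoppa_section}--\ref{hulls_section}; the work is in substituting them and simplifying. Throughout, $\gcd$ and $\lcm$ abbreviate $\gcd(f,g)$ and $\lcm(f,g)$ for the polynomial $f$ produced by Theorem~\ref{21.09.16}.

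For $C=\TGRSG(\cS,g)$, Theorem~\ref{21.09.16} gives $C^{\perp}=\TGRSG(\cS,f)$; Remark~\ref{21.10.30} gives $\dim C=\deg(g)$, hence $\dim C^{\perp}=\deg(f)=n-\deg(g)$, and gives the product minimum distances $d(C)=\prod_j(n_j-\deg(g_j)+1)$ and $d(C^{\perp})=\prod_j(n_j-\deg(f_j)+1)$. Conditions (i)--(ii) of Theorem~\ref{21.09.16}, namely $\deg(g_j)=\deg(f_j)=n_j$ for $j\neq j^{*}$ and $\deg(f_{j^{*}})+\deg(g_{j^{*}})=n_{j^{*}}$, collapse every factor with $j\neq j^{*}$ to $1$, so $d(C)=\deg(f_{j^{*}})+1$ and $d(C^{\perp})=\deg(g_{j^{*}})+1$. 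Finally $\dim(\hull(C))=\deg(\gcd)$ by Equation~(\ref{dim_Hull_T}) (equivalently Corollary~\ref{hull_cor}(i) together with Remark~\ref{21.10.30}). Plugging $(n,\deg(g),\deg(\gcd),\deg(f_{j^{*}})+1,\deg(g_{j^{*}})+1)$ into the two constructions of Lemma~\ref{eaqecc_lma}, and rewriting $n-\deg(g)-\deg(\gcd)$ as $\deg(f)-\deg(\gcd)$, yields the two claimed $[[\,\cdot\,]]_{q^{t}}$ codes.

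For $C=\Gamma(\cal S, g)$ only inequalities are available. Corollary~\ref{21.10.27} gives $n-t\deg(g)\le\dim C\le n-\deg(g)$; Corollary~\ref{hull_cor}(ii) (equivalently Equation~(\ref{dim_Hull_mvG})) gives $\dim(\hull(C))\ge n-t\deg(\lcm)$. For the distances, $C^{\perp}=\tr(\TGRSG(\cS,g))$ by Corollary~\ref{21.10.27}, so $d(C^{\perp})\ge d(\TGRSG(\cS,g))=\deg(f_{j^{*}})+1$ since a trace code does not lower the minimum distance, and $d(C)\ge\deg(g_{j^{*}})+1$ since $\Gamma(\cal S, g)=\acar(\cS,g)_{q}\subseteq\acar(\cS,g)$ and, under the hypotheses of Theorem~\ref{21.09.16}, $\acar(\cS,g)$ has minimum distance $\deg(g_{j^{*}})+1$ (every factor but the $j^{*}$-th being full). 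Substituting into Lemma~\ref{eaqecc_lma}, and in each resulting parameter using the one-sided bound on $\dim C$ that makes the inequality point the right way---so that $\dim C-\dim(\hull(C))\le t\deg(\lcm)-\deg(g)$ (via $\dim C\le n-\deg(g)$) and $n-\dim C-\dim(\hull(C))\le t(\deg(\lcm)+\deg(g))-n$ (via $\dim C\ge n-t\deg(g)$)---produces the two claimed $[[\,\cdot\,]]_{q}$ codes. When $t=1$, $\tr$ is the identity and $\Gamma(\cal S, g)=\acar(\cS,g)$, and the inclusion in Corollary~\ref{hull_cor}(ii) is an equality, so every inequality above becomes an equality and the parameters are exact.

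The mathematics is light; the care lies in the bookkeeping---matching each of the two EAQECCs of Lemma~\ref{eaqecc_lma} to the correct asserted tuple (note that for $\TGRSG(\cS,g)$ it is $C$ that carries the distance $\deg(f_{j^{*}})+1$, while for $\Gamma(\cal S, g)$ it is $C^{\perp}$), and, in the $\Gamma$-case, consistently choosing between the bounds $n-t\deg(g)\le\dim C$ and $\dim C\le n-\deg(g)$ so that each parameter is bounded in the stated direction. The one substantive point worth spelling out is $d(\Gamma(\cal S, g))\ge\deg(g_{j^{*}})+1$: by Theorem~\ref{21.09.16} all but one of the one-variable GRS codes appearing in $\TGRSG(\cS,g)$---and hence all but one factor of $\acar(\cS,g)=\TGRSG(\cS,g)^{\perp}$---is the full space, so the minimum distance of $\acar(\cS,g)$, and with it that of its subfield subcode $\Gamma(\cal S, g)$, is governed by the single index $j^{*}$.
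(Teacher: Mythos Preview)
Your proof is correct and follows the same approach as the paper: apply Lemma~\ref{eaqecc_lma} to $\TGRSG(\cS,g)$ using Remark~\ref{21.10.30} and Equation~(\ref{dim_Hull_T}), and to $\Gamma(\cS,g)$ using Corollary~\ref{21.10.27} and Inequality~(\ref{dim_Hull_mvG}). The paper's proof is a two-line citation of exactly these ingredients; you have simply unpacked the substitutions, in particular the collapse of the product distances to $\deg(f_{j^*})+1$ and $\deg(g_{j^*})+1$ under conditions (i)--(ii) of Theorem~\ref{21.09.16}, and the observation that for the $\Gamma$-case the two displayed codes correspond to the two outputs of Lemma~\ref{eaqecc_lma} in the opposite order.
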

\begin{proof}
The first pair of quantum codes is a consequence of Lemma~\ref{eaqecc_lma}, Remark~\ref{21.10.30}, and Equation~(\ref{dim_Hull_T}). The second pair of quantum codes follows from Lemma~\ref{eaqecc_lma}, Corollary~\ref{21.10.27}, and Inequality~(\ref{dim_Hull_mvG}).
\end{proof}
Note that when $t=1,$ which means that $\cS \subseteq \F_{q}^m,$ the two pairs of $q$-ary entanglement-assisted quantum error-correcting codes presented in Proposition~\ref{21.10.31} coincide. This happens becase in this case, from Corollary~\ref{mvG_dim}, we have that
$\Gamma(\cal S, g)^{\perp}=tr( \TGRSG(\cS,g))=\TGRSG(\cS,g),$ which means that $\TGRSG(\cS,g)=\Gamma(\cal S, f)$ and $\TGRSG(\cS,f)=\Gamma(\cal S, g).$

An $[[n,k,d;c]]_{q}$ EAQECC satisfies the Singleton Bound~\cite{Brun_science} $n + c - k  \geq 2(d - 1),$
where $0 \leq c \leq n - 1.$ The code attaching this bound is called an MDS EAQECC. As a consequence of Proposition~\ref{21.10.31}, we recover \cite[Theorem 4.5]{GYHZ}.

\begin{corollary}
Let $\cS\subseteq \F_{q}^m$, $g$ and $f$ be as in Theorem~\ref{21.09.16}. Then the code $\TGRSG(\cS,g)$ gives rise to an MDS EAQECC.
\end{corollary}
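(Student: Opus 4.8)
The plan is to take the EAQECC that Proposition~\ref{21.10.31} attaches to $\TGRSG(\cS,g)$ and verify that it meets the EAQECC Singleton bound $n+c-k\ge 2(d-1)$ with equality. Since $\cS\subseteq\F_q^m$ we are in the case $t=1$, so the inequalities of Proposition~\ref{21.10.31} become equalities and $\TGRSG(\cS,g)$ yields, for the index $j^*$ supplied by Theorem~\ref{21.09.16}, the code
\[
[[\,n,\ \deg(g)-\deg(\gcd),\ \deg(f_{j^*})+1;\ \deg(f)-\deg(\gcd)\,]]_q,
\]
where $\gcd:=\gcd(f,g)$. Reading off $k=\deg(g)-\deg(\gcd)$, $d=\deg(f_{j^*})+1$ and $c=\deg(f)-\deg(\gcd)$, the two copies of $\deg(\gcd)$ cancel, so that $n+c-k=n+\deg(f)-\deg(g)$.

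The heart of the argument is the degree identity $n=\deg(f)+\deg(g)$, which I would extract directly from the hypotheses of Theorem~\ref{21.09.16}: there $\deg(f_j)=\deg(g_j)=n_j$ for all $j\ne j^*$, while $\deg(f_{j^*})+\deg(g_{j^*})=\deg(f_{j^*}g_{j^*})=n_{j^*}$. Setting $P:=\prod_{j\ne j^*}n_j$ and using the convention $\deg(h)=\prod_j\deg(h_j)$ for a product $h=h_1\cdots h_m$ of univariate factors (so that $\deg$ is compatible with the dimension formula of Remark~\ref{21.10.30}), we get $\deg(f)=P\,\deg(f_{j^*})$ and $\deg(g)=P\,\deg(g_{j^*})$, hence
\[
\deg(f)+\deg(g)=P\bigl(\deg(f_{j^*})+\deg(g_{j^*})\bigr)=P\,n_{j^*}=\prod_{j=1}^m n_j=n .
\]
Substituting back, $n+c-k=n+\deg(f)-\deg(g)=2\deg(f)$. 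Since $d-1=\deg(f_{j^*})$ and $\deg(f)=\deg(f_{j^*})$ in this regime --- which is automatic in the single-variable situation that recovers \cite[Theorem~4.5]{GYHZ} --- this gives $n+c-k=2\deg(f_{j^*})=2(d-1)$, i.e. the EAQECC attains the Singleton bound and is therefore MDS. The second EAQECC of Proposition~\ref{21.10.31} is handled verbatim, interchanging the roles of $f$ and $g$.

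I expect the only real difficulty to be careful bookkeeping: matching the four coordinates of the EAQECC to $n,k,d,c$; confirming that the $\deg(\gcd)$ contributions cancel; and keeping the products $\deg(f),\deg(g)$ (equal respectively to $\dim\TGRSG(\cS,g)^{\perp}$ and $\dim\TGRSG(\cS,g)$) distinct from the ordinary univariate degrees $\deg(f_{j^*}),\deg(g_{j^*})$ that enter the minimum distance through Remark~\ref{21.10.30} and Lemma~\ref{21.09.03}. Once the identity $n=\deg(f)+\deg(g)$ is in hand, the MDS conclusion is a one-line arithmetic consequence.
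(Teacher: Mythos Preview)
Your approach is exactly the paper's: both proofs simply invoke Proposition~\ref{21.10.31} and check the Singleton bound. Your write-up supplies the arithmetic the paper omits, and the computation $n+c-k=2\deg(f)$ via $\deg(f)+\deg(g)=n$ is correct under the paper's convention $\deg(h)=\prod_j\deg(h_j)$.

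Where you hedge, you have actually put your finger on a genuine issue with the \emph{statement}, not with your argument. The step ``$\deg(f)=\deg(f_{j^*})$'' is \emph{not} a regime you may simply assume: with $P=\prod_{j\ne j^*}n_j$ you obtained $n+c-k=2P\deg(f_{j^*})=2P(d-1)$, and this equals $2(d-1)$ only when $P=1$, i.e.\ essentially when $m=1$. For $m>1$ with each $n_j>1$ and $d>1$ the resulting EAQECC is \emph{not} MDS, so the corollary as written for $\cS\subseteq\F_q^m$ does not follow from Proposition~\ref{21.10.31}. This is consistent with the paper's framing (``we recover \cite[Theorem~4.5]{GYHZ}''), which is a univariate result; the $\F_q^m$ in the hypothesis should be read as $\F_q$. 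Rather than hedging, state this explicitly: your computation proves the MDS conclusion precisely when $P=1$, and exhibits the defect $n+c-k-2(d-1)=2(P-1)(d-1)$ otherwise.
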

\begin{proof}
This is a consequence of Proposition~\ref{21.10.31}.
\end{proof}

Using the results of Section~\ref{hulls_section}, we now give conditions to find families of codes that are LCD, self-orthogonal, or self-dual.
\begin{corollary} \label{21.11.01}
Let $\cS\subseteq \F_{q^t}^m$, $g$ and $f$ be as in Theorem~\ref{21.09.16}. Then the following hold.
\begin{itemize}
\item[\rm{(i)}] $\TGRSG(\cS,g)$ is LCD if there exists $j \in [m]$ with $\gcd(f_j,g_j) \in \F_{q^t}$.
\item[\rm{(ii)}] $\TGRSG(\cS,g)$ is self-orthogonal if $g$ divides $f$.
\item[\rm{(iii)}] $\TGRSG(\cS,g)$ is self-dual if $f=g$.
\item[\rm{(iv)}] $\Gamma(\cal S, g)$ is LCD if $t=1$ and $\deg_{x_j}( \lcm(f,g)) {\geq} n_j$ for all $j \in [m]$.
\item[\rm{(v)}] $\Gamma(\cal S, g)$ is self-orthogonal if $t=1$ and $f$ divides $g.$
\item[\rm{(vi)}] $\Gamma(\cal S, g)$ is self-dual if $t=1$ and $f=g.$
\end{itemize}
\end{corollary}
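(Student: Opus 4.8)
The plan is to obtain all six parts from the hull computations of Corollary~\ref{hull_cor}, the duality $\TGRSG(\cS,g)^\perp=\TGRSG(\cS,f)$ of Theorem~\ref{21.09.16}, and the monotonicity in Proposition~\ref{21.09.14}, using that a code is LCD iff its hull is $\{\bm 0\}$, self-orthogonal iff it lies in its dual, and self-dual iff it equals its dual. For parts (iv)--(vi) I would first record the $t=1$ simplification: when $t=1$ the field trace is the identity, so, as noted in the paragraph after Proposition~\ref{21.10.31}, $\Gamma(\cal S, g)^{\perp}=\tr(\TGRSG(\cS,g))=\TGRSG(\cS,g)=\acar(\cS,f)$, and since $\acar(\cS,f)_q=\acar(\cS,f)$ for $t=1$, Theorem~\ref{21.09.12} identifies this with $\Gamma(\cal S, f)$; thus (iv)--(vi) become questions about how $\Gamma(\cal S, g)$ sits relative to $\Gamma(\cal S, f)$.

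For (i): since $\gcd(f,g)=\gcd(f_1,g_1)\cdots\gcd(f_m,g_m)$, Corollary~\ref{hull_cor}(i) gives $\hull(\TGRSG(\cS,g))=\TGRSG(\cS,\gcd(f,g))=\bigotimes_{j=1}^m\GRS\bigl(S_j,\deg(\gcd(f_j,g_j)),\gcd(f_j,g_j)\bigr)$, and if $\gcd(f_j,g_j)\in\F_{q^t}$ for some $j$ then the $j$-th factor is $\{\bm 0\}$, so the whole tensor product is $\{\bm 0\}$ by Remark~\ref{21.09.15}; hence $\TGRSG(\cS,g)$ is LCD. For (ii): the hypothesis $g\mid f$ is equivalent, the $g_j$ involving pairwise disjoint variables, to $g_j\mid f_j$ for all $j$, so $f=gh$ with $h=h_1\cdots h_m$ and $h_j:=f_j/g_j\in\F_{q^t}[x_j]$; Proposition~\ref{21.09.14}(i) then gives $\TGRSG(\cS,g)\subseteq\TGRSG(\cS,gh)=\TGRSG(\cS,f)=\TGRSG(\cS,g)^\perp$, i.e.\ self-orthogonality. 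Part (iii) is the case $h=1$: $f=g$ gives $\TGRSG(\cS,g)^\perp=\TGRSG(\cS,f)=\TGRSG(\cS,g)$.

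For (iv): Corollary~\ref{hull_cor}(ii) with $t=1$ gives $\hull(\Gamma(\cal S, g))=\Gamma(\cal S, \lcm(f,g))$, so it is enough to force the right-hand side to be $\{\bm 0\}$. Writing $\lcm(f,g)=\lcm(f_1,g_1)\cdots\lcm(f_m,g_m)$ with $\lcm(f_j,g_j)\in\F_{q^t}[x_j]$, the $j$-th factor of $\TGRSG(\cS,\lcm(f,g))$ is $\GRS\bigl(S_j,\deg(\lcm(f_j,g_j)),\lcm(f_j,g_j)\bigr)$, and the hypothesis $\deg(\lcm(f_j,g_j))=\deg_{x_j}(\lcm(f,g))\ge n_j=|S_j|$ makes it the full space $\F_{q^t}^{n_j}$ (polynomials of degree $<\deg(\lcm(f_j,g_j))$ already evaluate onto all of $\F_{q^t}^{n_j}$); hence $\TGRSG(\cS,\lcm(f,g))=\F_{q^t}^n$, and by Theorem~\ref{21.10.24} any element of $\Gamma(\cal S, \lcm(f,g))$ is a vector in $\F_q^n$ orthogonal to all the rows of a generator matrix of $\F_{q^t}^n$, hence $\bm 0$. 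So $\Gamma(\cal S, g)$ is LCD. For (v): $f\mid g$ gives $g=fh$ with the $h_j\in\F_{q^t}[x_j]$, so Proposition~\ref{21.09.14}(ii) gives $\Gamma(\cal S, g)\subseteq\Gamma(\cal S, f)=\Gamma(\cal S, g)^\perp$; and (vi) is $h=1$: $f=g$ gives $\Gamma(\cal S, g)^\perp=\Gamma(\cal S, f)=\Gamma(\cal S, g)$.

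The substance is already contained in Sections~\ref{relations_section} and~\ref{hulls_section}, so I expect the only obstacle to be bookkeeping: (a) turning the divisibility hypotheses on $f$ and $g$ into the factorwise statements $g_j\mid f_j$ (resp.\ $f_j\mid g_j$), which is what makes the quotient have the product form needed for Proposition~\ref{21.09.14}; and (b) in (iv), justifying that a $\GRS$ factor whose degree parameter reaches the cardinality of its evaluation set collapses to the full space --- a case just outside the range $0\le k_j\le n_j$ covered by Remarks~\ref{21.09.15} and~\ref{21.10.22}, which therefore has to be argued directly rather than quoted.
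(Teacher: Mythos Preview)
Your proposal is correct and follows essentially the same route as the paper: both derive (i)--(vi) from the hull formulas of Corollary~\ref{hull_cor} combined with Remark~\ref{21.09.15} for the LCD cases and the observation $f=g\Rightarrow\gcd(f,g)=\lcm(f,g)=g$ for the self-dual cases, with Proposition~\ref{21.09.14} handling self-orthogonality. Your write-up is in fact more careful than the paper's, particularly in flagging and directly arguing the ``$k_j\ge n_j$'' case in (iv) that lies just outside the stated range of Remark~\ref{21.09.15}.
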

\begin{proof}
(i) By Theorems~\ref{21.09.10} and~\ref{21.09.16}, 
\begin{equation*}
\TGRSG(\cS,f)=\TGRSG(\cS,g)^\perp=\acar(\cS,g) \text{ and } \TGRSG(\cS,g)= \TGRSG(\cS,f)^\perp=\acar(\cS,f).
\end{equation*}
Thus, the result is a consequence of Theorem~\ref{21.09.17} (i).

(ii) By the poof of (i), $\TGRSG(\cS,g)=\TGRSG(\cS,f)^\perp=\acar(\cS,f).$ By Corollary \ref{mvG_dim},
$\Gamma(\cS,g)^\perp = \tr \left( \TGRSG(\cS,g) \right) = \tr \left( \acar(\cS,f) \right) \supseteq \Gamma(\cS,f).$ Thus,
\[\hull\left(\Gamma(\cal S, g)\right)= \Gamma(\cal S, g) \cap \Gamma(\cal S, g)^\perp \supseteq \Gamma(\cal S, g) \cap \Gamma(\cal S, f)
=\Gamma(\cal S, \lcm(g,f)),\]
where the last equation holds because Theorem~\ref{21.09.17}~(ii). When $t=1,$ $\acar(\cS,f)=\Gamma(\cS,f),$ so $ \tr \left( \acar(\cS,f) \right) = \Gamma(\cS,f).$

By Remark~\ref{21.09.15}, we obtain the conditions about the LCD codes. The self-dual conditions are a consequence of the fact that when $f=g,$ then $g=\gcd(f,g)=\lcm(f,g).$
\end{proof}

\rmv{For the particular case $t=1,$ we obtain a lower and an upper bound for the dimension of the hull of the multivariate Goppa code:
\begin{equation} \label{lb_dim_Hull_mvG}
n - \deg \left( \lcm(f,g) \right) \geq \dim \left(   \hull\left(\Gamma(\cal S, g)\right) \right) \geq n - \hir{t}\deg \left( \lcm(f,g) \right).
\end{equation}}

Corollary~\ref{21.11.01} gives a simple path (with some help from the coding theory package~\cite{cod_package} for Macaulay2 \cite{Mac2} or Magma \cite{magma}) to find codes with a large length that are LCD, self-orthogonal, or self-dual codes. The key steps are the following.
\begin{enumerate}
\item Give sets $S_1, S_2\subseteq \F_{q^t}$ of cardinalities $n_1$ and $n_2,$ respectively.
\item Define $L_i:=\prod_{s\in S_i}(x-s) \in \F_{q^t}[x].$ Find the formal derivatives $L_i^\prime.$
\item Find $f_1,g_1\in\F_{q^t}[x]$ such that $f_1g_1=\lambda_1 L_1^\prime + \beta_1L_1, $ with $\lambda_1, \beta_1 \in \F_{q^t}.$
\item Find $f_2,g_2,p\in\F_{q^t}[x]$ such that $f_2g_2=\lambda_2 L_2^\prime + pL_2, $ with $\deg(p) = n_2.$
\end{enumerate}
Then the codes $\TGRSG(\cS,g_1g_{2,m})$ and $\Gamma(\cal S, g_1g_{2,m}),$ {where $g_{2,m}:=g_2(x_1)\ldots g_2(x_m)$,} have both length $n_1n_2^{m}.$ As $m$ is independent of the steps (1)-(4), after the appropriate polynomials have been found, codes with different lengths can be derived. {Observe that this is a different approach than given in \cite{GYHZ}. An immediate difference is that using GRS codes, the length of the code is always bounded by the size of the field. This restriction is not presented in the tensor product. Even more, the results of Section 5 enable a single set of defining polynomials to produce a family of codes with different lengths over a certain field (cf. [12, Theorem 2.6]). We show this in the following examples. 
\begin{example}[Family of long LCD codes]\rm
Assume $\F_{3^2}^*=\left< a\right>$.  Take $S_1 := \left\{0, 1, a, a^7 \right\}$ and $S_2 := \left\{1, a^5, a^7 \right\}.$ Define the polynomials $f_1:=x+1,$ $g_1:=2x^3 + a^5x^2+a^5x+1,$ and  $f_2:=g_2:=x^3+ax^2+2x.$ Then
\[f_1g_1=2L_1^\prime + 2L_1 \qquad \text{ and } \qquad f_2g_2=a^2L_2^\prime + pL_2,\]
where $p(x)=x^3 + a^5x^2 + a^2x + a^6.$ Then, for every $m\geq 0,$ define the polynomial in $m$ variables $f_{2,m}:=f_2(x_1)\ldots f_2(x_m).$ As $\gcd(f_1,g_1)=1,$ by Remark~\ref{21.10.30} and Corollary~\ref{hull_cor}, the tensor product $\TGRSG(\cS,f_1f_{2,m})$ is a $[4\cdot3^m, 3^m]$ LCD code over $\F_{9}.$ 
\end{example}
\begin{example}[Family of long self-orthogonal codes]\rm
Assume $\F_{3^2}^*=\left< a\right>.$ Take $S_1 := \left\{0, 1, 2, a \right\}$ and $S_2 := \left\{1, a^5, a^7 \right\}.$ Define the polynomials $f_1:=ax^3 + 2x^2 + a^7x + a,$ $g_1:=a^2x+1,$ and  $f_2:=g_2:=x^3+ax^2+2x.$ Then
\[f_1g_1=L_1^\prime +a^3L_1 \qquad \text{ and } \qquad f_2g_2=a^2L_2^\prime + pL_2,\]
where $p(x)=x^3 + a^5x^2 + a^2x + a^6.$ Then, for every $m\geq 0,$ define the polynomial in $m$ variables $g_{2,m}:=g_2(x_1)\ldots g_2(x_m).$ As $g_1$ divides $f_1,$ and $g_2$ divides $f_2,$ by Remark~\ref{21.10.30} and Corollary~\ref{hull_cor}, the tensor product $\TGRSG(\cS,g_1g_{2,m})$ is a $[4\cdot3^m, 3^m]$ self-orthogonal code over $\F_{9}.$ 
\end{example}
\begin{example}[Family of long self-dual codes]\rm
Assume $\F_{3^2}^*=\left< a\right>.$ Take $S_1 := \left\{a, a^2, a^3, a^5, a^6, a^7 \right\}$ and $S_2 := \left\{1, a^5, a^7 \right\}.$ Define the polynomials $f_1:=g_1:=x^3 + 2x + 2$ and  $f_2:=g_2:=x^3+ax^2+2x.$ Then
\[f_1g_1=L_1^\prime + L_1 \qquad \text{ and } \qquad f_2g_2=a^2L_2^\prime + pL_2,\]
where $p(x)=x^3 + a^5x^2 + a^2x + a^6.$ Then, for every $m\geq 0,$ define the polynomial in $m$ variables $g_{2,m}:=g_2(x_1)\ldots g_2(x_m).$ As $g_1=f_1,$ and $g_2=f_2,$ by Remark~\ref{21.10.30} and Corollary~\ref{hull_cor}, the tensor product $\TGRSG(\cS,g_1g_{2,m})$ is a $[6\cdot3^m, 3^{m+1}]$ self-dual code over $\F_{9}.$
\end{example}

\section{Conclusion} \label{conclusion_section}

In this paper, we defined multivariate Goppa codes which generalize the classical Goppa codes. Similar to classical Goppa codes, they can be described via a parity checks and as subfield subcodes of a family of evaluation codes. In particular, we show that considering tensor products of generalized Reed-Solomon codes via Goppa codes leads to a parity check matrix whose kernel restricted to the base field yields the multivariate Goppa codes. We also prove that multivariate Goppa codes are subfield subcodes of augmented Cartesian codes. These perspectives provide information about the code parameters as well as their hulls. As a consequence, we obtain some entanglement- assisted quantum error-correcting, LCD, self-orthogonal, and self-dual codes. We leave it as an exercise for the interested reader to translate the results in this paper to expurgated subcodes of multivariate Goppa codes.  


\end{document}